\theoremstyle{plain}
\theoremstyle{plain}
\newtheorem{theorem}{Theorem}[section]
\newtheorem{lemma}[theorem]{Lemma}
\newtheorem{claim}[theorem]{Claim}
\newtheorem{conjecture}[theorem]{Conjecture}
\newtheorem{definition}[theorem]{Definition}
\theoremstyle{definition}
\newtheorem{observation}[theorem]{Observation}
\newtheorem{remark}[theorem]{Remark}
\DeclareSymbolFont{bbold}{U}{bbold}{m}{n}
\DeclareSymbolFontAlphabet{\mathbbold}{bbold}
\newcommand{\reals}{{\mathbb R}}
\newcommand{\eps}{{\varepsilon}}
\newcommand{\1}{\mathds{1}}
\newcommand{\expec}{\mathbb{E}}
\newcommand{\cost}{\mathrm{cost}_p}
\newcommand\gencost{\mathrm{cost}_{p,q}}
\newcommand\vol{{\operatorname{vol}}}
\title{Approximating Fair Clustering with Cascaded Norm Objectives}
\author{Eden Chlamt\'a\v{c}\thanks{Department of Computer Science, Ben-Gurion University. The work was done while the author was visiting and supported in part by TTIC. Email: \href{chlamtac@cs.bgu.ac.il}{chlamtac@cs.bgu.ac.il}} \and Yury Makarychev\thanks{Toyota Technological Institute at Chicago (TTIC). Supported by NSF awards CCF-1718820, CCF-1955173, and CCF-1934843. Email: \href{mailto:yury@ttic.edu}{yury@ttic.edu}} \and Ali Vakilian\thanks{Toyota Technological Institute at Chicago (TTIC). Supported by NSF award CCF-1934843. Email: \href{mailto:vakilian@ttic.edu}{vakilian@ttic.edu}}}
\date{}
\begin{document}

\maketitle

\begin{abstract} 
We introduce the $(p,q)$-Fair Clustering problem. In this problem, we are given a set of points $P$ and a collection of different weight functions $W$. We would like to find a clustering which minimizes the $\ell_q$-norm of the vector over $W$ of the $\ell_p$-norms of the weighted distances of points in $P$ from the centers. This generalizes various clustering problems, including Socially Fair $k$-Median and $k$-Means, and is closely connected to other problems such as Densest $k$-Subgraph and Min $k$-Union. 

We utilize convex programming techniques to approximate the $(p,q)$-Fair Clustering problem for different values of $p$ and $q$. When $p\geq q$, we get an $O(k^{(p-q)/(2pq)})$, which nearly matches a $k^{\Omega((p-q)/(pq))}$ lower bound based on conjectured hardness of Min $k$-Union and other problems. When $q\geq p$, we get an approximation which is independent of the size of the input for bounded $p,q$, and also matches the recent $O((\log n/(\log\log n))^{1/p})$-approximation for $(p, \infty)$-Fair Clustering by Makarychev and Vakilian (COLT 2021).
\end{abstract}
\section{Introduction}
Clustering is one of the fundamental problems in various areas including theoretical computer science, machine learning and operations research. Due to its broad range of applications, different researchers have considered different {\em cluster models}. A typical model in theoretical computer science is the {\em centroid model} which contains the classic $k$-Median, $k$-Means, $k$-Center and more generally, $k$-Clustering with $\ell_p$-norm objective. In $k$-Clustering with $\ell_p$-norm objective, we are given a set of points in a metric space and the goal is to find a set of $k$ centers $C$ so as to minimize the total $\ell_p$ distance of points to $C$. These problems are all known to be NP-hard and admit various efficient approximation algorithms~\citep{gonzalez1985clustering,hochbaum1985best,charikar2002constant,kanungo2004local,gupta2008simpler,li2016approximating,ahmadian2019better}. 

Recently, clustering has been widely studied under fairness constraints to prevent or alleviate the bias and discrimination in the solution constructed by the existing algorithms~\citep{chierichetti2017fair,ahmadian2019clustering,bera2019fair,bercea2019cost,backurs2019scalable,schmidt2019fair,huang2019coresets,kleindessner2019fair,chen2019proportionally,micha2020proportionally,jung2019center,mahabadi2020individual,kleindessner2020notion,brubach2020pairwise}. In particular, the notion we consider in this paper is inspired by the notion of {\em Socially Fair Clustering} introduced by~\citet{abbasi2020fair,ghadiri2020fair}. In Socially Fair Clustering, points in the input belong to different groups and the goal is to pick $k$ centers so as to minimize the maximum clustering cost incurred to any of the different (demographic) groups in the input. 
%In very recent works,~\cite{?} provide tight approximation guarantees of socially fair clustering with any $\ell_p$-norm objective. Further, efficient FPT algorithms have been designed for socially fair $k$-Median and $k$-Means~\cite{?}. 
We note that the objective of Socially Fair Clustering was also previously examined by~\citet{anthony2010plant} in the context of {\em Robust $k$-Clustering} where a set of possible scenarios (i.e., a set of clients) are given in the input and the goal is to find a set of centers which is a good solution for all scenarios.

Here, we consider a generalization of both Socially Fair Clustering and Robust Clustering which is denoted as $(p,q)$-Fair Clustering and is formally defined as below. Besides the fact that $(p,q)$-Fair Clustering generalizes various known problems, the new formulation is a generalization of the classic $k$-Clustering problem and understanding the complexity of the problem under different values of $p$ and $q$ is of interest in itself. 

%understanding the approximability of clustering for different values of parameters $p$ and $q$ is of independent interest. 
\begin{definition}[\bf $(p,q)$-Fair Clustering]\label{def:costs}
An instance $\cal I$ consists of a metric space $([m], d)$ on $m$ points, $n$ different groups of points which are represented by non-negative weight functions $w_1, \ldots, w_n$ (for each $i\in [n]$, $w_i: [m]\rightarrow \mathbb{R}_{\geq 0}$) and a target number of centers $k$. For each $i\in [n]$, the $\ell_p$-cost of group $i$ w.r.t.\ a set of centers $C\subseteq [m]$ is defined as 
\begin{align}
\cost^{\cal I}(C, w_i) = \left(\sum_{j=1}^m w_i(j) \cdot d(j, C)^p\right)^{1/p},
\end{align}
where $d(j, C) = \min_{j'\in C} d(j,j')$. Moreover, %for a set of $k$ centers $C \subseteq [m]$, the %$\gencost(C)$
the $(\ell_p,\ell_q)$-cost of a set of $k$ centers $C\subseteq [m]$ is defined as the $\ell_q$-norm of the vector consisting the $\ell_p$-cost of the groups w.r.t.\ $C$ is defined as
\begin{align}\label{eq:cost-p-q}
\gencost^{\cal I}(C) = \left(\sum_{i=1}^n \cost(C, w_i)^q\right)^{1/q}= \left(\sum_{i=1}^n \left(\sum_{j=1}^m w_i(j) \cdot d(j, C)^p\right)^{q/p}\right)^{1/q}
\end{align}
In $(p,q)$-Fair Clustering, the goal is to find a set of $k$ centers $C\subseteq [m]$ so as to minimize $\gencost^{\cal I}(C)$.
\end{definition} 
\begin{remark}
  In the above definition, we may omit the superscript $\cal I$ indicating the instance relative to which we compute the cost when it is clear from the context.
\end{remark}

By setting $q=\infty$, $(p,q)$-Fair Clustering captures Socially Fair Clustering with $\ell_p$-cost. Moreover, compared to Socially Fair Clustering, the formulation of $(p,q)$-Fair Clustering allows for a ``relaxed'' way of enforcing the {\em fairness} requirement. In particular, by varying the value of $q$ from $p$ to $\infty$, the objective interpolates between the objective of classic $k$-Clustering with $\ell_p$-cost and that of Socially Fair Clustering with $\ell_p$-cost.\footnote{We remark that in Socially Fair Clustering  with $\ell_p$-cost as defined by~\cite{makarychev2021approximation}, the goal is to minimize the maximum over all groups $i$ of $\cost(C, w_i)^p$. However, in $(p,\infty)$-Fair Clustering the goal is to minimize the maximum over all groups $i$ of $\cost(C, w_i)$. Note that an $\alpha$-approximation algorithm for $(p,\infty)$-Fair Clustering implies an $\alpha^p$-approximation guarantee for Socially Fair Clustering with $\ell_p$-cost and vice versa.} Depending on how accurate the group membership information is, the user can set the value of $q$ accordingly and aim for a clustering with a ``reasonable'' fairness constraint w.r.t.\ the extracted group membership information: the more accurate the group membership information gets, the higher value can be assigned to $q$. 

We remark that the special case of $(\infty,1)$-Fair Clustering was previously studied by~\citet{anthony2010plant}\ under the name of {\em Stochastic $k$-Center} where the intuition is that we have a potential set of (client) scenarios and know that one of them is likely to happen but do not know which one.~\citeauthor{anthony2010plant}\ proved that Stochastic $k$-Center is as hard to approximate as the Densest $k$-Subgraph. In particular, $(\infty,1)$-Fair Clustering with 0-1 weight functions can be seen to be equivalent to the Min $s$-Union problem (a generalization of Densest $k$-Subgraph)~\citep{chlamtac2018densest,chlamtavc2017minimizing}, in which we are given a collection of $m$ sets %$S_1,\ldots,S_m\subseteq [n]$
and an integer $s\in[m]$ and the goal is to choose $s$ sets from the input whose union has minimum cardinality. In addition to the hardness result for Stochastic $k$-Center,~\citeauthor{anthony2010plant}\ also provided an $O(\log m)$-approximation for the Stochastic $k$-Median problem which is equivalent to $(1,1)$-Fair Clustering with arbitrary weight functions $w_1, \cdots, w_n$.  

In a different direction,~\cite{goyal2021fpt} designed constant factor approximation FPT algorithms for Socially Fair $k$-Median and $k$-Means and provided hardness results as well.

\begin{remark}\label{rem:approx-lp-clustering}
{\rm
\cite{chakrabarty2019approximation} observed that a slightly modified variant of standard existing algorithms for $k$-Median (e.g.,~\cite{charikar2002constant}) can be applied to obtain a constant factor approximation algorithm for the more general (weighted) $k$-Clustering with $\ell_p$-cost (for $p\in [1,\infty)$). This can also be achieved by applying the algorithm of~\citep{makarychev2021approximation} for $(p,\infty)$-Fair Clustering when the number of groups is one.
}
\end{remark}
\subsection{Our Results and Techniques.} 
In this paper, we design approximation algorithms for $(p,q)$-Fair Clustering for all values of $p,q \in [1, \infty)$. The following theorems are our main contributions in this work. 

\begin{theorem}[$p\geq q$]\label{thm:main-p>q}
In the regime $p\geq q$, there exists a polynomial time algorithm that computes an $O\left(k^{\frac{p-q}{2pq}}\right)$-approximation for $(p,q)$-Fair Clustering. 
\end{theorem}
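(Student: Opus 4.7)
The plan is to start with a natural convex programming relaxation. Let $x_{ij}$ denote the fractional assignment of point $j$ to potential center $i$, let $y_i$ denote the fractional opening of center $i$, and introduce an auxiliary per-group cost variable $t_i$ for each group $i$. Beyond the standard constraints $\sum_i y_i \leq k$, $\sum_{i'} x_{i'j} = 1$, $x_{i'j} \leq y_{i'}$, include the convex constraint
\[
t_i \;\geq\; \Bigl(\sum_{j} w_i(j) \sum_{i'} x_{i'j}\, d(i', j)^p\Bigr)^{1/p},
\]
which is convex in $(x,t)$ because the inner expression is linear in $x$ and $u \mapsto u^{1/p}$ is concave on $[0,\infty)$. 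The objective $(\sum_i t_i^q)^{1/q}$ is convex as an $\ell_q$-norm. This CP can be solved in polynomial time; let $\opt^* \leq \opt$ denote its optimum and $t_i^*$ the fractional per-group costs.

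\textbf{Rounding via Weighted $\ell_p$-Clustering.} The approach is to reduce rounding to a single weighted $\ell_p$-clustering instance, for which Remark~\ref{rem:approx-lp-clustering} provides an $O(1)$-approximation. Choose positive per-group weights $\alpha_i$ as a function of $t_i^*$ and of a threshold $\tau$ (to be chosen), and approximately solve $\min_{|C|\leq k} \sum_j (\sum_i \alpha_i w_i(j))\,d(j,C)^p$. For the output $C$, H\"older's inequality with exponents $p/q$ and $p/(p-q)$ (valid for $p \geq q$) yields
\[
\gencost(C) \;\leq\; \Bigl(\sum_i \alpha_i\, \cost(C, w_i)^p\Bigr)^{1/p} \cdot \Bigl(\sum_i \alpha_i^{-q/(p-q)}\Bigr)^{(p-q)/(pq)}.
\]
The first factor is $O(1)$ times the fractional optimum of the aggregated $\ell_p$-clustering; the second depends only on the $\alpha_i$'s.

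\textbf{Choice of Weights and Balancing.} Setting $\alpha_i = \max(t_i^*, \tau)^{q-p}$ and using $\max_i t_i^{**} \leq \opt$ (where $t_i^{**}$ are the per-group costs of the integer optimum), one bounds $\sum_i \alpha_i (t_i^{**})^p \leq \tau^{q-p} \opt^p$ and $\sum_i \alpha_i^{-q/(p-q)} = \sum_i \max(t_i^*, \tau)^q \leq \opt^q + n\tau^q$. Plugging these in and balancing $\tau \asymp \opt \cdot k^{-1/(2q)}$ delivers $\gencost(C) = O(k^{(p-q)/(2pq)}\,\opt)$ in the regime $n \leq \sqrt{k}$. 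For larger $n$, I would introduce a preliminary bucketing: group indices with $t_i^*$ in the same dyadic range into ``super-groups'' (by summing the weight functions), so that the analysis effectively runs over $O(\sqrt{k})$-many groups.

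\textbf{Main Obstacle.} The delicate step is the bucketing for $n \gg \sqrt{k}$. A naive uniform aggregation (summing all $w_i$'s and solving one weighted $\ell_p$-clustering) would only yield an $O(n^{(p-q)/(pq)})$-approximation via power-mean, short of the target by a $\sqrt{k}$ factor. The improvement to $k^{(p-q)/(2pq)}$ relies on exploiting the CP's per-group information: at most $O(\sqrt{k})$ indices can be ``heavy'' (with $t_i^* \geq \opt^*/k^{1/(2q)}$), while the remaining light indices together account for at most $\opt^{*q}$ in the $\ell_q$-sum and can be handled more loosely. Making this bucketing rigorous — in particular, verifying that treating each dyadic bucket as a super-group preserves the Hölder analysis without cross-bucket interference, and that the resulting per-bucket weighted $\ell_p$-clustering instances combine cleanly into a single $k$-center solution — is the main technical hurdle.
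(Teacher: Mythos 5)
You claim the constraint $t_i \geq \bigl(\sum_j w_i(j)\sum_{i'}x_{i'j}d(i',j)^p\bigr)^{1/p}$ is convex ``because the inner expression is linear in $x$ and $u\mapsto u^{1/p}$ is concave.'' The implication runs the other way: since $u\mapsto u^{1/p}$ is \emph{concave}, the right-hand side is a concave function of $x$, and the set $\{(x,t): t\geq g(x)\}$ for concave $g$ is generally \emph{not} convex (e.g.\ $\{(x,t):t\geq\sqrt{x}\}$ contains $(0,0)$ and $(1,1)$ but not $(1/2,1/2)$). This is precisely the obstacle the paper identifies: the natural cost constraint is not convex when $p>q$, which is why the paper replaces it with the weaker---but genuinely convex---Constraint~\eqref{LP:q-le-p-cost}, with $\sum_{j'}d(j,j')^q x_{jj'}$ inside rather than $\sum_{j'}d(j,j')^p x_{jj'}$. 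Without a valid relaxation you cannot even obtain the $t_i^*$ values on which the rest of the argument depends (and you also need a computable estimate of $\opt$ to set $\tau$; the fractional CP value is how the paper gets one via binary search).

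\textbf{The reduction-to-weighted-clustering step is the wrong aggregation; the hard case is genuinely unresolved.} Your H\"older is applied over \emph{groups} $i$, which yields a factor $\bigl(\sum_i\alpha_i^{-q/(p-q)}\bigr)^{(p-q)/(pq)}$ with $n$ terms, and indeed your bound only holds for $n\leq\sqrt{k}$. In the general regime $n\gg\sqrt{k}$ the dyadic-bucketing patch does not obviously work: merging $r$ groups by summing their weight functions can \emph{decrease} the true $(\ell_p,\ell_q)$-cost by up to $r^{(p-q)/(pq)}$ (power-mean gap since $q<p$), so a good solution for the merged instance does not transfer back to a comparably good one for the original. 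Also note that, as written, your analysis for $n\leq\sqrt{k}$ never actually uses the per-group CP values $t_i^*$ (your first-factor bound only uses $\alpha_i\leq\tau^{q-p}$, and your second-factor bound reduces to $n\tau^q\leq\opt^q$), so the CP information is not yet doing any work---the intended role of $t_i^*$ in the bucketing is exactly the part that is missing. The paper avoids this problem entirely by first applying the reduction \`a la Charikar et al.\ to get a sparsified instance on $K$ points, and then doing a dichotomy on $|K|-k$: if $|K|-k<\sqrt{k}$, it reweights each \emph{point} $\ell$ by $\hat{w}(\ell)=\sum_i w_i'(\ell)^{q/p}$ and applies H\"older \emph{over the at most $|K|-k$ points outside the chosen centers} (Lemma~\ref{lem:analysis-l-leq-q-k-median}), giving a factor $(|K|-k)^{(p-q)/(pq)}\leq k^{(p-q)/(2pq)}$ that is independent of $n$; if $|K|-k\geq\sqrt{k}$, it uses a randomized rounding whose expected cost is $O\bigl(B\cdot(k/(|K|-k))^{(p-q)/(pq)}\bigr)$ (Lemma~\ref{lem:analysis-p-geq-q}). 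Your proposal has neither the sparsification nor the dichotomy, and these are the ideas that make the $k^{(p-q)/(2pq)}$ bound hold for all $n$.
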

This approximation may be nearly optimal in the following sense. As we show in Appendix~\ref{sec:hardness}, assuming certain hardness conjectures for Min $s$-Union and related problems, it is in fact not possible to get an algorithm for $(p,q)$-Fair Clustering in the regime $p >q$ with an approximation factor less than $m^{\Omega\left((p-q)/(pq)\right)}$ (note that $k\leq m$). See Theorem~\ref{thm:hardness} for details. Moreover, when $k\ll m$ our algorithm achieves a much better approximation which is independent of $m$. 

\begin{theorem}[$p\le q$]\label{thm:main-p-le-q}
In the regime $p\le q$, there exists a polynomial time algorithm that computes an $O\left(\left(\frac{q}{\ln{(1 + q/p)}}\right)^{1/p}\right)$-approximation for $(p,q)$-Fair Clustering.
\end{theorem}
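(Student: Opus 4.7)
The strategy is to solve a natural convex programming (CP) relaxation of $(p,q)$-Fair Clustering, round it with per-point independent randomized assignment, and then control the per-group $q$-th moment via a Chernoff-type inequality. The algorithmic skeleton mirrors Makarychev--Vakilian's treatment of the $(p,\infty)$ case, but with a $q$-th moment bound in place of a union bound.

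First, I would set up the CP. With the usual $y_i \in [0,1]$ (center opening) and $x_{ij}\in[0,1]$ (assignment) variables subject to $\sum_i y_i \le k$, $\sum_i x_{ij}=1$, and $x_{ij}\le y_i$, the per-group $\ell_p^p$-cost $f_g(x):=\sum_j w_g(j)\sum_i x_{ij}\,d(i,j)^p$ is linear in $x$. Since $q/p \ge 1$, the map $t \mapsto t^{q/p}$ is convex on $\reals_{\ge 0}$, so the objective $\Phi(x) := \sum_g f_g(x)^{q/p} = \gencost(x)^q$ is convex and can be minimized in polynomial time, yielding $x^*$ with $\Phi(x^*) \le \opt^q$.

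Next I would round. Using a standard preprocessing (analogous to the Charikar--Li consolidation for $\ell_p$ $k$-median, or to the bundle construction in Makarychev--Vakilian), I would open at most $k$ centers so that each point $j$ can then be assigned \emph{independently} to an opened center $\sigma(j)$ via a distribution $\pi_j$ with $\expec[d(j,\sigma(j))^p] \le O(1)\cdot \sum_i x^*_{ij}d(i,j)^p$ and a pointwise cap on $d(j,\sigma(j))^p$ that is an $O(1)$-multiple of the fractional per-point contribution. For each group $g$ the rounded $\ell_p^p$-cost $S_g := \sum_j w_g(j)\,d(j,\sigma(j))^p$ is then a sum of independent nonnegative random variables with $\expec[S_g] \le O(1)\cdot f_g(x^*)$. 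The core claim is the moment bound $\expec[S_g^{q/p}]^{p/q} \le O(q/\ln(1+q/p)) \cdot f_g(x^*)$, which I would prove by a Chernoff/MGF argument: for bounded nonnegative summands $Z_j \in [0,M]$ with $\sum_j \expec[Z_j]=\mu$, one has $\expec[e^{\lambda \sum_j Z_j}] \le \exp(\mu(e^{\lambda M}-1)/M)$, and combining with the numeric inequality $s^r \le r!\,\lambda^{-r} e^{\lambda s}$ and optimizing $\lambda$ produces the moment factor $r/\ln(1+rM/\mu)$, which becomes $O(q/\ln(1+q/p))$ after the truncation. Summing over $g$, taking $q$-th roots, and applying Jensen's inequality yields an integer solution $C$ with $\gencost(C) \le O\!\left((q/\ln(1+q/p))^{1/p}\right)\cdot \gencost(x^*) \le O\!\left((q/\ln(1+q/p))^{1/p}\right)\cdot \opt$.

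The main obstacle is the moment inequality with the precise $\ln(1+q/p)$ denominator: off-the-shelf Rosenthal or Burkholder bounds give only polynomial-in-$r$ losses, so the Chernoff-type derivation above is essential, and the truncation threshold must be tuned so that the capped per-point cost is a mild $O(1)$-blowup over its fractional value while simultaneously the ratio $M/\mu$ gives exactly the right logarithmic improvement. A secondary subtlety is making the rounding truly independent across points, handled by opening centers before sampling assignments from the per-point distributions; and one should check that the bound degrades gracefully at the endpoints $q=p$ (where it correctly collapses to an $O(1)$-approximation).
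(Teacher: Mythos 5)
Your proposal uses only the natural CP relaxation (minimizing $\Phi(x)=\sum_g f_g(x)^{q/p}$ subject to the clustering polytope), but this relaxation has a polynomially large integrality gap, so the approach cannot succeed. The paper exhibits exactly this failure mode (Section~\ref{sec:p-le-q-relaxation}, Figure~\ref{fig:gap-example}): with unit distances, each group $g$ supported on $t$ points, and each such point assigned with probability $\varepsilon$ to a far center, the natural relaxation certifies $z_g\approx(\varepsilon t)^{q/p}$, while any independent rounding makes $S_g$ approximately $\mathrm{Poisson}(\varepsilon t)$, and for $\varepsilon t=o(1)$ one has $\expec[S_g^{q/p}]\approx\varepsilon t\gg(\varepsilon t)^{q/p}$. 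This is an $n^{\Omega((q-p)/q^2)}$ gap, far larger than your target $O\big((q/\ln(1+q/p))^{1/p}\big)$.

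The specific step that fails is your moment inequality $\expec[S_g^{q/p}]^{p/q}\le O\big(q/\ln(1+q/p)\big)\cdot f_g(x^*)$. The Chernoff/MGF derivation you outline actually produces (after optimizing $\lambda$) a bound of the form $\expec[S^r]^{1/r}\lesssim \max\big(\mu,\; rM/\ln(1+rM/\mu)\big)$, where $M$ is the pointwise cap and $\mu=\sum_j\expec[Z_j]$. For this to collapse to $O\big(r/\ln(1+r)\big)\cdot\mu$ you need $M/\mu=O(1)$, which is simply not available: the natural CP constraint only bounds $\mu=f_g(x^*)$, and a single point can carry nearly all of a group's fractional cost (e.g.\ a single Bernoulli has $M/\mu=1/\mu$ unbounded). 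The "$O(1)$-multiple of the fractional per-point contribution" cap you propose bounds each $Z_j$ relative to its own contribution, not relative to $\mu$, so it does not yield the required ratio. Concretely, any tight moment inequality for a sum of independent nonnegative variables—whether Chernoff-based or Latała's—necessarily involves a second quantity besides $\mu$ (either $\max_j Z_j$ or $\sum_j\expec[Z_j^{q/p}]$), and that second quantity is exactly what the natural relaxation does not control.

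This is why the paper introduces the extra family of constraints~\eqref{CP:cluster-bound1}--\eqref{CP:cluster-bound2} indexed by disjoint collections $(V_\ell)_{\ell\in\Lambda}$: constraint~\eqref{CP:cluster-bound2} bounds $\sum_\ell x'_{\ell\sigma(\ell)}\cdot\vol_i(V_\ell)^{q/p}$, which after rounding is precisely $\sum_\ell\expec[Z_{i\ell}^{q/p}]$, the "second term" in Latała's inequality. Because these constraints are exponentially many and depend on the clusters $\{V_\ell\}$ produced by the Charikar-style reduction (which in turn depend on the CP solution), the paper must solve the CP via a round-or-cut / ellipsoid scheme: run the reduction, check constraints~\eqref{CP:cluster-bound1}--\eqref{CP:cluster-bound2} only for the resulting $(V_\ell)_\ell$, and cut if they fail. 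Your proposal has neither the additional constraints nor the round-or-cut machinery, so the rounding analysis cannot be completed along the lines you describe. To repair it you would need to (a) add constraints that bound per-cluster $q/p$-th moments, (b) handle the fact that the relevant clusters are only known after rounding (hence round-or-cut), and (c) replace the single-term Chernoff bound with a two-term inequality (Latała, or a carefully cased Chernoff argument) that uses both the natural constraint and the new one.
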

Note that $(p,p)$-Fair Clustering (i.e., $p=q$) with 0-1 % %properly picked
weight functions %$w_1, \cdots, w_n$ that constitute discrete sets $P_1, \cdots, P_n$ reduces t
is equivalent to the classic $k$-Clustering with $\ell_p$-cost. In this setting, Theorem~\ref{thm:main-p-le-q} gives a constant factor approximation which is essentially optimal.
This bound implies an $O\left((\frac{\log n}{\log \log n})^{1/p}\right)$-approximation for $(p,\infty)$-Fair Clustering\footnote{This is the case, since the $(\ell_p, \ell_\infty)$-cost objective is equal to the $(\ell_p, \ell_{\log n})$-cost objective up to a constant factor.},  which matches the %$\left(e^{O(p)}\cdot\frac{\log n}{\log \log n}\right)$-approximation
recent approximation algorithm of~\citep{makarychev2021approximation} and the hardness result of~\citep{bhattacharya2014new}. %
Thus, for any value of $p$, the approximation guarantee of Theorem~\ref{thm:main-p-le-q} smoothly interpolates between the optimal approximation bounds for the previously studied special cases of $q=\infty$ and $q=p$. 

Finally, we remark that for any $p,q$ it is %always
possible to get an $O(n^{|1/p-1/q|})$ approximation by approximating the outer $\ell_q$-norm in the formulation of $(p,q)$-Fair Clustering with $\ell_p$-norm and solving the obtained instance of $(p,p)$-Fair Clustering with one of the existing constant factor approximation algorithms of $k$-Clustering with $\ell_p$-cost.

\paragraph{Overview of our algorithms.} At a high-level, our approach is to solve a convex programming relaxation of $(p,q)$-Fair Clustering and then round the fractional solution to get an approximate integral solution. 
However, even coming up with an efficient convex program of the problems is non-trivial and introducing such a relaxation is one of our contributions in this paper. 
While a generalization of the standard LP relaxations for clustering problems (e.g., $k$-Median~\citep{charikar2002constant}) results in a natural convex programming relaxation of $(p,q)$-Fair Clustering when $p \leq q$, it is less clear how to even come up with a valid convex program relaxation of $(p,q)$-Fair Clustering for the other scenario, $p \ge q$. In particular, the natural constraint we need to add is not convex when $p > q$. 
In the $p\leq q$ regime, we cannot use the natural convex relaxation, or even the stronger relaxation which generalizes the LP-relaxation for $(p, \infty)$-Fair Clustering used by \citeauthor{makarychev2021approximation}, %Makarychev and Vakilian, %for $(p, \infty)$-Fair Clustering, 
since even the latter has %is
a simple $\Omega(n^{(q-p)/q^2})$ integrality gap construction. To overcome this polynomial lower-bound, we use the \textit{round-or-cut} framework (see~\cite{carr2000,an2017lp,li2017uniform,chakrabarty2019generalized}): we have an exponential family of constraints with a separation oracle based on our rounding algorithm; we add a constraint from the family only if the rounding algorithm detects that it is violated. These constraints are needed to bound different moments of the cost function applied to a set of clusters generated by the first step of our rounding.

Next, we employ a slightly modified version of a reduction technique introduced by~\cite{charikar2002constant} which yields a simplified instance of the same problem along with a corresponding convex programming solution. After performing the reduction on the input instance and solution to the corresponding convex program, we will get a {\em sparsified} instance (i.e., the number of points becomes $O(k)$) along with an adjusted feasible solution that together satisfy several useful properties which relate to our convex programming relaxations and are crucial for our rounding algorithm. %Loosely speaking, besides the sparsity property, the reduction preserves the cost up to a constant factor.
More details on the reduction and the properties guaranteed by it are provided in Section~\ref{sec:reduction}. 

Finally, we perform a %randomized
rounding procedure on the sparsified instance and obtain an approximate integral solution. By the properties of the reduction, using the output integral solution for the sparsified instance, we can construct an integral solution of the original instance without increasing the cost by more than a constant factor. We remark that the analysis of our rounding algorithm crucially relies on the properties guaranteed by the ``non-standard'' constraints we added to the relaxations of the problem in both $p\le q$ and $p \geq q$ regimes. 

\subsection{Paper Organization.}
We start with providing the convex programming relaxations of $(p,q)$-Fair Clustering in Section~\ref{sec:convex-program}. In Section~\ref{sec:reduction}, we concisely state the properties of the reduction by~\citep{charikar2002constant} that is used in our rounding algorithm -- its proof is deferred to Appendix~\ref{apx:proof-Charikar}. Then, in Section~\ref{sec:random}, we describe our rounding algorithm and in Sections~\ref{sec:p-geq-q-approx} and~\ref{sec:q-geq-p-approx} we analyze the approximation guarantee of the rounding algorithm in the regimes $p\ge q$ and $p \le q$ respectively. Finally, in Appendix~\ref{sec:hardness}, we explore the connection of $(p,q)$-Fair Clustering to Min $s$-Union and prove polynomial hardness of approximation for the problem assuming standard hardness conjectures for Min $s$-Union.

\section{Convex Programming Relaxations}\label{sec:convex-program}
We will use somewhat different convex programming relaxations in the two parameter regimes, when $p\geq q$ and $p\leq q$. 
However, the relaxations for both regimes will use the following common assignment/clustering polytope, which is often used for problems such as $k$-Median or Facility Location:
\begin{align}
\mathbf{P_{\mathbf{cluster}}^k}:=\{(x,y)\in&[0,1]^{m\times m}\times[0,1]^m\text{ satisfying the following}\}\nonumber\\
    & \sum_{\ell=1}^m x_{j\ell}= 1 &\forall j\in[m]\label{LP:k-median-weight}\\
    &x_{j\ell}\leq y_\ell:=x_{\ell\ell} &\forall j\in[m],\ell\in[m]\setminus\{j\}\label{LP:center-bound}\\
    &\sum_{j=1}^m y_j \leq k\label{LP:k-median-size}\\
    &x_{j\ell}\geq 0&\forall j,\ell\in [m]\label{LP:k-median-sign}
\end{align}

In the intended solution, $x_{j\ell}$ is a 0-1 variable which is $1$ if and only if $\ell$ is a center and $j$ is assigned to a cluster centered at $\ell$. The variable $y_j=x_{jj}$ is $1$ if and only if $j$ is a center.

\subsection{Convex Relaxation for the Case \texorpdfstring{$p\geq q$}{p>q}.}\label{sec:q-le-p-relaxation}

Our algorithm will use the following convex program. We think of $B$ as a fixed parameter of the convex program, which we can use in a binary search.

\begin{align}
\min\quad &B\nonumber\\
\text{s.t.}\quad &\sum_{i=1}^n z_i\leq B^q\label{LP:q-le-p-obj-fun}\\
& (x,y)\in\mathbf{P^k_{cluster}}\\
&z_i\geq\left(\sum_{j=1}^m w_i(j)\left(\sum_{j'=1}^m d(j,j')^q x_{jj'}\right)^{p/q}\right)^{q/p}&\forall i\in[n]\label{LP:q-le-p-cost}
\end{align}

The variable $z_i$ represents the cost incurred by group $i$, raised to the $q$. The natural constraint to express the connection between $z_i$ and the clustering variables $x_{jj'}$ would be
\begin{equation}
    z_i\geq\left(\sum_{j=1}^mw_i(j)\sum_{j'=1}^m d(j,j')^px_{jj'}\right)^{q/p}.
    \label{LP:cost-example}
\end{equation} However, this is not a convex constraint in the $p>q$ regime, so we further relax this connection and write Constraint~\eqref{LP:q-le-p-cost} instead.

\subsection{Convex Relaxation for the Case \texorpdfstring{$p\leq q$}{p<q}.}\label{sec:p-le-q-relaxation}

To motivate our relaxation, first consider the natural convex relaxation which is identical to the relaxation in the previous section but with Constraint~\eqref{LP:q-le-p-cost} replaced by the more natural Constraint~\eqref{LP:cost-example} above (which is convex in the $p\leq q$ regime).
Consider even a simple case where all distances are 1, and the points can be partitioned into sets $(J_1,J_2)$ such that $y_j=1$ for every $j\in J_2$, and for every $j\in J_1$ there is exactly one $j'\in J_2$ such that $x_{jj'}=\varepsilon$ and $y_j=x_{jj}=1-\varepsilon$, and $x_{jj''}=0$ for all $j''\not\in\{j,j'\}$. Also suppose every weight function $w_i$ is an indicator function for some set $P_i\subseteq J_1$ of cardinality $|P_i|=t$ (see Figure~\ref{fig:gap-example}). Note that in this case, the natural cost constraints would give us $z_i\geq (\sum_{j\in P_i} d(j,j')^p\cdot  \varepsilon)^{q/p} = (\varepsilon\cdot t)^{q/p}$ and $B\geq \Bigl(\sum_{i=1}^n z_i\Bigr)^{1/q} \geq n^{1/q}\cdot (\varepsilon\cdot t)^{1/p}$. It would be natural to apply the following randomized rounding: 
\begin{itemize}
\item assign each $j\in J_1$ to center $j'$ independently with probability $y_{jj'} = \varepsilon$, and otherwise make $j$ a center;
\item make each $j\in J_2$ a center (since $y_j = 1)$.
\end{itemize}

\begin{figure}
\centering
\begin{tikzpicture}
\tikzstyle{vertex}=[circle, draw, fill=red!50,inner sep=0pt, minimum width=1.5mm];

\draw (-3,0) ellipse (1cm and 1.2cm);
\draw (3,0) ellipse (1cm and 1.2cm);
\node[below] at (-1.9,-0.6) {$J_1$};
\node[below] at (4.1,-0.6) {$J_2$};

\draw[<->] (-3,-1.6) to node[above] {\footnotesize $d=1$}(3,-1.6);

\node[vertex](va) at (-3.3, 0.7){};
\node[vertex,label=below:$j$](vj) at (-2.7, 0.1){};
\node[vertex](vb) at (-2.6, -0.8){};
\node[vertex](vc) at (-3.4, -0.4){};

\node[vertex](vap) at (3.3, 0.7){};
\node[vertex,label=below:$j'$](vjp) at (2.7, 0.1){};
\node[vertex](vbp) at (2.6, -0.8){};
\node[vertex](vcp) at (3.4, -0.4){};

\draw[thick,->] (vj)--node[above, align=center] {\footnotesize $x_{jj'} = \varepsilon$}(vjp);

\tikzset{every loop/.style={min distance=5mm,in=45,out=135,looseness=10}}
\draw[->] (vj) edge[loop above,thick] node[above] {\footnotesize $1 - \varepsilon$} (vj);
\end{tikzpicture}
\caption{In this example, the natural convex program has a large integrality gap.}
\label{fig:gap-example}
\end{figure}
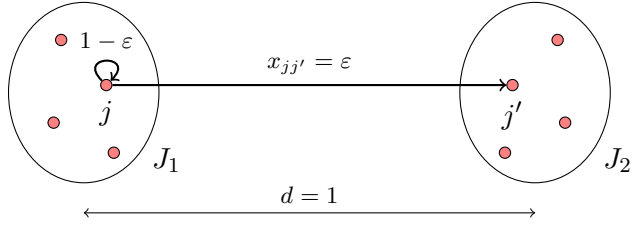

Denote the obtained set of centers by $C$. Let us see what the cost of group $i$ w.r.t. $C$ is. We have $$\expec\Bigl[\cost(C, w_i)^p\Bigr]
= \expec\Bigl[\sum_{j\in P_i} d(j,C)^p \Bigr] = \varepsilon t.$$
Now, we are interested in the expectation $ \expec\Bigl[\left(\sum_{j\in P_i} d(j,C)^p\right)^{q/p} \Bigr]$, as out ultimate goal is to upper bound the expectation of $\gencost(C)^q$ (see formula~(\ref{eq:cost-p-q})).
Observe that if $\varepsilon\cdot t=\Omega(\log n)$, then $\sum_{j\in P_i} d(j,C)^p$ is concentrated around the mean and 
$$ \expec\Bigl[\Bigl(\sum_{j\in P_i}^m d(j,C)^p\Bigr)^{q/p} \Bigr] \sim (\varepsilon t)^{q/p} = O(z_i),$$
as desired. However, if $\varepsilon t = o(1)$, then $\sum_{j=1}^m d(j,C)^p$ is close to a Poisson random variable with rate $\varepsilon t$. Thus, the expectation of $\Bigl(\sum_{j=1}^m d(j,C)^p\Bigr)^{q/p}$ is at least $\sim \varepsilon t$, which is much greater than $(\varepsilon t)^{q/p}$ (in this regime).
Thus, 
$$  
\left(\expec\Bigl[\sum_{i=1}^n\Bigl(\sum_{j=1}^m d(j,C)^p\Bigr)^{q/p}\Bigr]\right)^{1/q} \gtrsim n^{1/q} (\varepsilon t)^{1/q},
$$
which is a factor of $\left(\frac{1}{\varepsilon\cdot t}\right)^{(q-p)/(pq)}$ larger than our bound for $B$.

To overcome this potentially polynomially-large gap, we need to introduce different constraints to handle different moments in the randomized rounding (we will use Latała's inequality in the analysis, which basically allows us to handle the two cases above separately). For instance, to decrease the integrality gap and improve the performance of the rounding w.r.t.\ the relaxation in the above example when $\eps t=o(1)$, we may want to add constraints of the form
$$z_i\geq\sum_{j=1}^m (1-y_j)w_i(j)^{q/p}d(j,[m]\setminus j)^q.$$
However, it is not enough to introduce such constraints for the original points, since we apply our randomized rounding to an instance produced by the reduction of~\citeauthor{charikar2002constant} Loosely speaking, the reduction partitions the set of points $[m]$ into $O(k)$ initial clusters $\{V_\ell\}$; it is guaranteed that there is a $k$-clustering %of cost $O(B^*)$
that for each $\ell$, assigns all points in $V_\ell$ to the same center, and has cost at most a constant times greater than the optimum clustering. % (where $B^*$ is the cost of the optimal clustering).
In our algorithm, for every $\ell$, we will find one center that serves all points in $V_\ell$ using randomized rounding.
For this approach to work, we need to introduce new constraints that depend on the partition $\{V_\ell\}$.
The challenge is that sets $\{V_\ell\}$ returned by the reduction depend on the convex program solution, so we do not know them when we solve the convex program. To deal with this problem, we introduce a family of exponentially-many constraints for all possible choices of sets $\{V_\ell\}$. Since we only need these constraints to hold for the set of clusters arising in the reduction and not all possible collections of clusters, we will check these constraints with a ``rounding separation oracle." That is, we will only check that the constraints hold for the set of clusters arising in the reduction, and if they do not, we will use a separating hyperplane to continue solving the convex program with the Ellipsoid Method. 
In order to describe our non-standard constraints, we introduce the following definition (see Figure~\ref{fig:volume}).

\begin{definition}
Consider a set of points $U\subset [m]$. Let $\vol_i(U) = \sum_{j\in U} w_i(j) d(j, [m] \setminus U)^p$.
\end{definition}
\begin{figure}
\centering
\usetikzlibrary{calc}
\begin{tikzpicture}
\tikzstyle{vertex}=[circle, draw, fill=red!50,inner sep=0pt, minimum width=1.5mm];
\tikzstyle{vertexNU}=[circle, draw, fill=blue!50,inner sep=0pt, minimum width=1.5mm];

\draw (0,0) ellipse (3.5cm and 1.2cm);
\node[below] at (3.3,-0.6) {$U$};

\node[vertex,label=below:$j$](va) at (-0.3, 0.7){};
\node[vertex](vb) at (-2.2, 0.1){};
\node[vertex](vc) at (-2.0, -0.8){};
\node[vertex](vd) at (-1.4, -0.4){};
\node[vertex](ve) at (0.4, -0.6){};
\node[vertex](vf) at (1.25, -0.9){};
\node[vertex](vg) at (1.57, 0.75){};
\node[vertex](vh) at (1.7, -0.3){};
\node[vertex](vi) at (2.9, 0.1){};

\node[vertexNU](vva) at (0.3, 1.8){};
\node[vertexNU] (vvb) at ($([scale=2]vb)$){};
\node[vertexNU] (vvc) at ($([scale=1.5]vc)$){};
\node[vertexNU] (vvd) at ($([xscale=0.7,yscale=4]vd)$){};
\node[vertexNU] (vve) at ($([scale=2.5]ve)$){};
\node[vertexNU] (vvg) at ($([scale=1.8]vg)$){};
\node[vertexNU] (vvh) at ($([scale=2.4]vh)$){};

\draw[<->] (va)--node[xshift=-8mm, yshift=3mm] {\footnotesize $d(j,[m]\setminus U)$}(vva);
\draw[<->,dotted] (vb)--(vvb);
\draw[<->,dotted] (vc)--(vvc);
\draw[<->,dotted] (vd)--(vvd);
\draw[<->,dotted] (ve)--(vve);
\draw[<->,dotted] (vf)--(vve);
\draw[<->,dotted] (vg)--(vvg);
\draw[<->,dotted] (vh)--(vvh);
\draw[<->,dotted] (vi)--(vvh);
\end{tikzpicture}
\caption{To compute $\vol_i(U)$, we find the distance from every point $j$ in $U$ to the closest point outside of $U$.}
\label{fig:volume}
\end{figure}
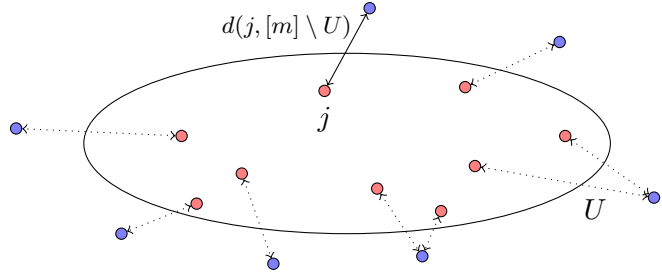

We briefly discuss the motivation for this definition.  Consider a set of points $U \subset [m]$ (later this set will be one of the clusters returned by the reduction). Assume that our algorithm opens a set of centers $C$. By Definition~\ref{def:costs}, $\cost(C, w_i)^p = \sum_{j=1}^m w_i(j) d(j, C)^p$. We want to lower bound the contribution of points in $U$ to $\cost(C, w_i)^p$ assuming that there are no centers from $C$ in $U$. We show that this contribution is at least $\vol_i(U)$. Indeed, observe that $d(j, C) \geq d(j, [m] \setminus U)$ for every $j\in U$, since $C\subseteq [m]\setminus U$. Therefore,
\begin{equation}\label{eq:vol-bound-for-one-set}
\sum_{j\in U} w_i(j) d(j, C)^p  \geq   
\sum_{j\in U} w_i(j) d(j, [m] \setminus U)^p = \vol_i(U).
\end{equation}
In the following claim we generalize this inequality for the case where we have many disjoint sets.

%Our non-standard constraints are based on the following claim:
%\ecnote{The expression $\sum_{j\in V_\ell}w_i(j)d(j,[m]\setminus V_\ell)^{p}$ comes up a lot. Maybe we should define it earlier and give some intuition for how/why we use it.}
\begin{claim}
Consider a collection $(V_\ell)_{\ell\in\Lambda}$ of pairwise disjoint subsets of $[m]$, indexed by some set of indices $\Lambda$. Let $C\subseteq [m]$ be a set of centers. Then for every group $i\in[n]$, the following lower bounds on $\cost(C, w_i)$ hold.
\begin{align}
\cost(C,w_i)^p &\geq \sum_{\ell\in\Lambda: V_\ell \cap C = \emptyset} \vol_i(V_\ell) \quad\text{ and } 
\label{ineq:cluster-bound1}
\\
\cost(C,w_i)^q&\geq \sum_{\ell\in\Lambda: V_\ell \cap C = \emptyset}\vol_i(V_\ell)^{q/p} \text{ if } p \leq q.
\label{ineq:cluster-bound2}
\end{align}

  \label{clm:cluster-bound}
\end{claim}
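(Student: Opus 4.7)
The plan is to establish (\ref{ineq:cluster-bound1}) first, by summing the single-set lower bound (\ref{eq:vol-bound-for-one-set}) over the ``center-free'' sets $V_\ell$, and then to derive (\ref{ineq:cluster-bound2}) from (\ref{ineq:cluster-bound1}) by raising both sides to the power $q/p$ and invoking an elementary super-additivity inequality available in the regime $q\geq p$.

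For (\ref{ineq:cluster-bound1}), I would begin from the definition $\cost(C,w_i)^p=\sum_{j=1}^m w_i(j)\,d(j,C)^p$ and drop every term whose index $j$ does not lie in some $V_\ell$ disjoint from $C$. Using pairwise disjointness of the $V_\ell$, this yields
\[
\cost(C,w_i)^p \;\geq\; \sum_{\ell\in\Lambda\,:\,V_\ell\cap C=\emptyset}\ \sum_{j\in V_\ell} w_i(j)\,d(j,C)^p.
\]
For each such $\ell$, we have $C\subseteq [m]\setminus V_\ell$, so the pointwise inequality $d(j,C)\geq d(j,[m]\setminus V_\ell)$ holds for every $j\in V_\ell$. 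Plugging this into the inner sum reproduces exactly (\ref{eq:vol-bound-for-one-set}) applied to $U=V_\ell$, and summing over $\ell$ finishes (\ref{ineq:cluster-bound1}).

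For (\ref{ineq:cluster-bound2}), since both sides of (\ref{ineq:cluster-bound1}) are non-negative and $q/p\geq 1$, I would raise both sides to the power $q/p$ to get
\[
\cost(C,w_i)^q \;\geq\; \Bigl(\sum_{\ell\in\Lambda\,:\,V_\ell\cap C=\emptyset}\vol_i(V_\ell)\Bigr)^{q/p},
\]
and then apply the standard super-additivity bound $\bigl(\sum_\ell a_\ell\bigr)^{r}\geq \sum_\ell a_\ell^{r}$, valid for any $r\geq 1$ and non-negative reals $a_\ell$ (it follows by induction from $(a+b)^r\geq a^r+b^r$, or equivalently from the monotonicity of $\ell_r$-norms in $r$), with $r=q/p$ and $a_\ell=\vol_i(V_\ell)$.

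I do not foresee a substantive obstacle: the entire argument reduces to the tautology $d(j,C)\geq d(j,[m]\setminus V_\ell)$ when $V_\ell\cap C=\emptyset$, together with two elementary inequalities. The only point worth flagging is that (\ref{ineq:cluster-bound2}) genuinely uses the hypothesis $q\geq p$ in the super-additivity step; in the opposite regime $q<p$ the power $r=q/p$ becomes concave on $\reals_{\geq 0}$, the inequality reverses, and (\ref{ineq:cluster-bound2}) can fail outright (as one can see by taking $|\Lambda|$ large with all $\vol_i(V_\ell)$ equal).
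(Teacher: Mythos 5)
Your proof is correct and takes essentially the same approach as the paper. The paper phrases the second step as the $\ell_r$-norm monotonicity $\|a\|_1\geq\|a\|_{q/p}$ applied to the vector of volumes, which is the same inequality you invoke as super-additivity of $t\mapsto t^{q/p}$; you even note this equivalence yourself.
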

\begin{proof}
Using Definition~\ref{def:costs}, inequality~(\ref{eq:vol-bound-for-one-set}), and that sets $V_{\ell}$ are pairwise disjoint, we get, 
$$\cost(C, w_i)^p = \sum_{j=1}^m w_i(j) d(j, C)^p \geq \sum_{\ell\in\Lambda: V_\ell \cap C = \emptyset} \sum_{j\in V_\ell} w_i(j) d(j, C)^p \stackrel{\text{\tiny by~(\ref{eq:vol-bound-for-one-set})}}{\geq} 
\sum_{\ell\in\Lambda: V_\ell \cap C = \emptyset} \vol_i(V_\ell),$$
as required.
We obtain inequality~(\ref{ineq:cluster-bound2}) from  inequality~(\ref{ineq:cluster-bound1}) by applying the inequality $\|a\|_1 \geq \|a\|_{q/p}$ (note that $p \leq q$) to the vector $a = (\vol_i(V_\ell))_{\ell \in \Lambda: V_{\ell} \cap C = \emptyset}$.
\end{proof}

We rewrite inequalities~(\ref{ineq:cluster-bound1}) and~(\ref{ineq:cluster-bound2}) as follows. 
\begin{align}
\cost(C,w_i)^q &\geq \left(\sum_{\ell\in\Lambda}\max\left(0,1-|C\cap V_\ell|\right)\vol_i(V_\ell)\right)^{q/p},\label{ineq:cluster-bound1u}\\
\cost(C,w_i)^q &\geq \sum_{\ell\in\Lambda}\max\left(0,1-|C\cap V_\ell|\right)\left(\vol_i(V_\ell)\right)^{q/p}.\label{ineq:cluster-bound2u}
\end{align}
Note that the term $\max\left(0,1-|C\cap V_\ell|\right)$ equals $1$ if $C \cap V_\ell = \emptyset$ and equals $0$ otherwise. To state our convex relaxation, we will use the following notation for disjoint collections of sets:
$$\Pi(m):=\{(\Lambda,(V_{\ell})_{\ell\in\Lambda})\mid \Lambda\subseteq[m]\text{ and } (V_{\ell})_{\ell\in\Lambda} \text{ are disjoint subsets of } [m]\}.$$
We are now ready to present our relaxation for the case $p \leq q$.

\begin{align}
\text{min }& B\nonumber\\
\text{s.t. }&(x,y)\in \mathbf{P^k_{cluster}}\label{LP-q-le-p-first}\\
& \sum_{i=1}^n z_i\leq B^{q} \\
 &z_i\geq\biggl(\sum_{j=1}^mw_i(j)\sum_{j'=1}^m d(j,j')^px_{jj'}\biggr)^{q/p}&\forall i\in[n]
    \label{CP:natural-cost-bound}\\
&z_i\geq \biggl(\sum_{\ell\in\Lambda}\max\bigl(0,1-\sum_{j\in V_\ell}y_j\bigr)\cdot \vol_i(V_\ell)\biggr)^{q/p} &\forall i\in[n],\forall(\Lambda,(V_\ell)_{\ell\in\Lambda})\in\Pi(m)     \label{CP:cluster-bound1}\\
    &z_i\geq \sum_{\ell\in\Lambda}\max\bigl(0,1-\sum_{j\in V_\ell}y_j\bigr)\cdot  \vol_i(V_\ell)^{q/p} &\forall i\in[n],\forall(\Lambda,(V_\ell)_{\ell\in\Lambda})\in\Pi(m)     \label{CP:cluster-bound2}
\end{align}
It follows from
inequalities~(\ref{ineq:cluster-bound1u}) and (\ref{ineq:cluster-bound2u}) that this is a valid relaxation.

As noted in the following remark, it will be easy to check Constraints~\eqref{CP:cluster-bound1} and~\eqref{CP:cluster-bound2} once the set of clusters is fixed.
\begin{remark}

 The correctness of the reduction of \citeauthor{charikar2002constant}\ does not require Constraints~\eqref{CP:cluster-bound1} and~\eqref{CP:cluster-bound2}, so they do not need to be checked before applying the reduction. The reduction will yield a set of centers $K$ along with the corresponding Voronoi cells $(V_\ell)_{\ell\in K}$. To check that Constraints~\eqref{CP:cluster-bound1} and~\eqref{CP:cluster-bound2} hold for these sets for any $\Lambda\subseteq K$, it suffices to check that they hold for $\Lambda=\{\ell\in K\mid \sum_{j\in V_\ell}y_j<1\}$.
\end{remark}
\begin{remark}
Let us briefly compare our relaxation with the one used by~\citeauthor{makarychev2021approximation} for solving $(p,\infty)$-Fair Clustering (let us call the latter the \textit{$(p,\infty)$-relaxation}).
The $(p,\infty)$-relaxation has Constraints~(\ref{LP-q-le-p-first})--(\ref{CP:natural-cost-bound}) and additional non-standard constraints. These non-standard constraints are essentially equivalent to Constraints~(\ref{CP:cluster-bound1}) for a very special family of $(\Lambda, (V_\ell)_{\ell \in \Lambda})$: $|\Lambda| = 1$ and set $V_\ell$ is a ball around some point in $[m]$ (note that Constraints (\ref{CP:cluster-bound1}) and (\ref{CP:cluster-bound2}) are equivalent when $|\Lambda| = 1$). Since there are polynomially many different balls in $([m],d)$, the $(p,\infty)$-relaxation has polynomially-many constraints. 
However, if we simply adapted this relaxation to the $(p,q)$-Fair Clustering problem, we would get a relaxation with a polynomially large integrality gap.
\end{remark}

\section{Reduction à la Charikar et al.}\label{sec:reduction}
We will use a slight modification of the reduction, which was used by~\citeauthor{charikar2002constant}\ to get the first constant factor approximation algorithm for $k$-Median. We use the same reduction to solve the $(p,q)$-Fair Clustering problem in both regimes, $p\leq q$ and $p\geq p$. 

We will refer to a not-necessarily-optimal solution $(x,y,z)$ that satisfies Constraints~(\ref{LP:q-le-p-obj-fun})--(\ref{LP:q-le-p-cost}) when $p\geq q$ and Constraints (\ref{LP-q-le-p-first})--(\ref{CP:natural-cost-bound}) but when $p\leq q$ as a \textit{fractional clustering solution} (when $p \leq q$, the solution might not satisfy Constraints~(\ref{CP:cluster-bound1}) and~(\ref{CP:cluster-bound2}). 

\begin{theorem}
There is a polynomial-time reduction that given an instance $\cal I$ of  $(p,q)$-Fair Clustering, a fractional clustering solution $(x,y,z)$ of value $B$, and a parameter $\gamma \in (0, 1/2)$ returns an instance ${\cal I}'$ on a subset of points $K \subset [m]$ of size $|K| \leq \frac{k}{1-\gamma}$ with weights $w'_i(\ell)$ (where $i\in [n]$ and $\ell \in K$), and a fractional clustering solution $(x',y', z')$ such that the following properties hold.
\begin{enumerate}
\item The cost of $(x',y', z')$ is at most twice that of solution $(x,y,z)$.

\item Let $\sigma(\ell) \in K$ be the closest point to $\ell\in K$ other than $\ell$ itself. The solution $(x',y',z')$ assigns each point $\ell \in K$ only to centers $\ell$ and $\sigma(\ell)$. Specifically, for all $\ell\in K$: $$x'_{\ell\ell} = y'_\ell; \qquad x'_{\ell\sigma(\ell)} = 1 - y'_{\ell}; \qquad
x'_{\ell j} = 0 \text { for }j\notin\{\ell, \sigma(\ell)\}.$$
Further, $x'_{\ell\sigma(\ell)} \leq \gamma$.
\item Let $\{V_\ell\}_{\ell \in K}$ be the Voronoi partition of $[m]$ induced by the set of centers $K$ (Voronoi sites). Then, for every $\ell$,  $y'_\ell=\min\left(1,\sum_{j\in V_\ell}y_j\right)$.%} 

\item The costs of combinatorial solutions for $\cal I$ and ${\cal I}'$ are related as follows: For every set of centers $L\subseteq [m]$ there is a set of centers $L'\subseteq K$ of cost 
\begin{equation}
    \gencost^{{\cal I'}}(L')\leq 2\gencost^{\cal I}(L)+\frac{4}{\gamma^{1/\nu}}B,
    \label{eq:costI-prime-leq-costI}
\end{equation}
and moreover for any set of centers $L\subseteq K$, the cost of $L$ as a solution for $\cal I$ is bounded by
\begin{equation}
    \gencost^{\cal I}(L) \leq \gencost^{{\cal I}'}(L) + \frac{2}{\gamma^{1/\nu}} B.
    \label{eq:costI-leq-costI-prime}
\end{equation}
\item If $\Lambda\subseteq K$ is a set of centers such that $\sigma(\ell)\not\in\Lambda$ for every $\ell\in\Lambda$, then the cost of centers $K\setminus\Lambda$ w.r.t.\ instance $\cal I$ is bounded by
$$\gencost^{\cal I}(K\setminus\Lambda)\leq\frac{6}{\gamma^{1/\nu}}B+2\left(\sum_{i=1}^n\left(\sum_{\ell\in\Lambda} \vol_i(V_\ell)\right)^{q/p}\right)^{1/q}.$$
\end{enumerate}
\label{thm:Charikar-condensed}
\end{theorem}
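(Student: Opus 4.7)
The plan is to adapt the classical filtering reduction of Charikar et al.\ so that a single construction handles both convex relaxations of Section~\ref{sec:convex-program}. First I would associate to each $j \in [m]$ a filtering radius $r_j$ by applying Markov's inequality to the appropriate inner power ($p$ or $q$, depending on the regime): this ensures that at least a $(1 - \gamma)$ fraction of the mass $x_{j\cdot}$ is supported within distance $r_j = O(F_j / \gamma^{1/\nu})$ of $j$, where $F_j$ denotes $j$'s fractional connection cost. I would then process points in order of increasing $r_j$ and greedily add $j$ to $K$ whenever no $\ell \in K$ already lies within distance $2 r_j$; this makes $K$ a net, forces each $\ell \in K$ to have its $(1-\gamma)$-fractional neighbors inside $V_\ell$, and gives $|K| \leq k / (1 - \gamma)$ via $\sum_j y_j \leq k$.

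Having built $K$ and its Voronoi partition $\{V_\ell\}_{\ell \in K}$, I would set $w'_i(\ell) = \sum_{j \in V_\ell} w_i(j)$, $y'_\ell = \min\!\bigl(1, \sum_{j \in V_\ell} y_j\bigr)$, and define $x'$ and $\sigma(\ell)$ as in Property~2, with $z'$ taken to tightly satisfy the relevant cost constraints. Property~3 is then definitional. Property~2 follows from the filtering bound $\sum_{j\in V_\ell} y_j \geq 1 - \gamma$, which yields $x'_{\ell \sigma(\ell)} = 1 - y'_\ell \leq \gamma$. For Property~1, since every point moves by at most $r_j$ when its mass is rerouted into $V_\ell$, an inner-norm triangle inequality plus $r_j \leq O(F_j / \gamma^{1/\nu})$ gives a cost blow-up of at most a factor $2$.

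For Property~4, given any $L \subseteq [m]$, I would project to $L' = \pi(L) \subseteq K$, where $\pi(\ell)$ is the nearest point of $K$ to $\ell$; the net property of $K$ gives $d(j, L') \leq d(j, L) + O(r_j / \gamma^{1/\nu})$ for every $j$, and applying Minkowski first in the inner $\ell_p$-norm and then in the outer $\ell_q$-norm yields~\eqref{eq:costI-prime-leq-costI}. The reverse inequality~\eqref{eq:costI-leq-costI-prime} is handled symmetrically using $L \subseteq K$ directly. For Property~5, with $\Lambda \subseteq K$ satisfying $\sigma(\ell) \notin \Lambda$ for every $\ell \in \Lambda$, I would use $K \setminus \Lambda$: any point $j \in V_\ell$ with $\ell \in \Lambda$ can be rerouted to $\sigma(\ell) \in K \setminus \Lambda$, and because $\sigma(\ell) \notin V_\ell$, the resulting contribution to the $\ell_p$-cost is essentially $\vol_i(V_\ell)$ up to the net radius. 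Summing and applying Minkowski in the $\ell_q$-norm over groups, then folding the residual radius terms into an additive $O(B / \gamma^{1/\nu})$ error, produces the stated bound.

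The main obstacle will be the simultaneous balancing of two competing tensions. On one side, the filtering must be tight enough to guarantee $x'_{\ell\sigma(\ell)} \leq \gamma$ and to leave only a constant blow-up of the fractional cost (Properties~1 and~2). On the other side, it must be loose enough that the additive $O(B/\gamma^{1/\nu})$ errors in Properties~4 and~5 stay controlled. Getting one reduction that works uniformly in both the $p \leq q$ and $p \geq q$ regimes requires carefully choosing which inner moment to Markov against, and verifying that the aggregated weights $w'_i(\ell) = \sum_{j\in V_\ell} w_i(j)$ yield a valid fractional clustering solution with respect to Constraint~\eqref{LP:q-le-p-cost} or~\eqref{CP:natural-cost-bound}, whichever applies.
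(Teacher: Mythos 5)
Your plan follows the paper's proof essentially step for step: the same Charikar-style filtering (greedy net over points sorted by fractional connection cost, with radius $\Theta(C(j)/\gamma^{1/\nu})$ where $\nu=\min(p,q)$), the same definitions of $y'$, $x'$, and $\sigma$, and the same use of the fact that the mixed $(\ell_p,\ell_q)$ cost is a seminorm in the vector of distances, which is what lets you ``apply Minkowski in the inner and then the outer norm'' for Properties 1, 4, and 5.

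One detail diverges from the paper and creates a small gap. You aggregate the weights over the Voronoi cells, $w'_i(\ell)=\sum_{j\in V_\ell}w_i(j)$, whereas the paper aggregates over the \emph{assignment} sets $U_\ell$ produced by the greedy pass. The distinction matters when you verify that $(x',y',z')$ is a valid fractional clustering solution of at most twice the cost: the paper's argument needs $C(\ell)\leq C(j)$ for every $j$ contributing to $w'_i(\ell)$, which holds for $j\in U_\ell$ by the processing order but can fail for $j\in V_\ell$ (a point with small $C(j)$ assigned early to some $\ell''$ may nevertheless lie in the Voronoi cell of a later center $\ell$ with larger $C(\ell)$). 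One can still salvage $C(\ell)\leq 2C(j)$ for $j\in V_\ell$ via a triangle inequality through $\ell''$, but this degrades the constant in Property 1 from $2$ to $4$ and propagates into Properties 4 and 5. Relatedly, your sketch of Property 5 leans on ``$\sigma(\ell)\notin V_\ell$'' where the actual bound requires the chain $d(j,\sigma(\ell))\leq d(j,\ell)+d(\ell,\ell')\leq 3d(j,\ell)+2d(j,[m]\setminus V_\ell)$ through the nearest point $j'$ outside $V_\ell$; that chain is where the coefficient $2$ on $\bigl(\sum_i(\sum_{\ell\in\Lambda}\vol_i(V_\ell))^{q/p}\bigr)^{1/q}$ and the additive $6\gamma^{-1/\nu}B$ come from. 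Neither issue is fatal, but both need to be written out to recover the stated constants.
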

\citeauthor{charikar2002constant}~presented this reduction and proved that it satisfies properties 1--3 and~\eqref{eq:costI-leq-costI-prime}. As far as we know, Property~\eqref{eq:costI-prime-leq-costI} has not been explicitly stated or used before, but its proof is similar to that of Property~\eqref{eq:costI-leq-costI-prime}. Finally, Property 5 is new -- we need it to use our new CP Constraints~(\ref{CP:cluster-bound1}) and~(\ref{CP:cluster-bound2}).
We will also use the following observation from~\citep{charikar2002constant}.

\begin{observation}
We can efficiently find a partition $(K_1,K_2)$ of $K$ such that $\sigma(K_1)\subseteq K_2$, $\sigma(K_2)\subseteq K_1$, and
  $$\sum_{\ell\in K_1}x'_{\ell\sigma(\ell)}\geq \frac{|K|-k}{2}.$$\label{obs:Charikar}
\end{observation}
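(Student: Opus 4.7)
The plan is to view $\sigma$ as the functional digraph $G_\sigma$ on vertex set $K$ with an arc from each $\ell$ to $\sigma(\ell)$, to prove that every cycle in $G_\sigma$ has length exactly $2$, and then to read the partition off the resulting bipartite structure. The numerical bound on $\sum_{\ell\in K_1} x'_{\ell\sigma(\ell)}$ will then follow by choosing the better of the two color classes.

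The heart of the argument is the claim that every directed cycle of $G_\sigma$ has length $2$. First, on any cycle $\ell_1\to\ell_2\to\cdots\to\ell_t\to\ell_1$ all edges must share a common length $r$: the nearest-neighbor inequality $d(\ell_{i+1},\ell_{i+2})=d(\ell_{i+1},\sigma(\ell_{i+1}))\le d(\ell_{i+1},\ell_i)$ forces the cycle-edge distances to be non-increasing as we traverse the cycle, hence all equal, and the same argument applied to the back-edge $\ell_t\to\ell_1$ yields $d(\ell_1,\ell_t)=r$ as well. Second, fix a total order $\prec$ on $[m]$ used to break ties in the definition of $\sigma$ (such a rule is implicit in the uniqueness of $\sigma$). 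For $t\ge 3$ the points $\ell_{i-1}$ and $\ell_{i+1}$ are distinct and both lie at distance $r$ from $\ell_i$, so the choice $\sigma(\ell_i)=\ell_{i+1}$ must come from tiebreaking and forces $\ell_{i+1}\prec\ell_{i-1}$ for every $i\in[t]$ (indices mod $t$). Rewriting these as $\ell_j\prec\ell_{j-2}$ and chaining them around the cycle produces $\ell_j\prec\ell_j$, a contradiction. Hence $t=2$.

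Once every cycle has length $2$, each weakly connected component of $G_\sigma$ consists of a single $2$-cycle together with a forest of in-trees rooted at its two vertices. A $2$-coloring $K=A\sqcup B$ with $\sigma(A)\subseteq B$ and $\sigma(B)\subseteq A$ can then be obtained in linear time by BFS or DFS from each $2$-cycle, alternating colors along $\sigma$-edges. Finally, Property~2 of Theorem~\ref{thm:Charikar-condensed} gives $x'_{\ell\sigma(\ell)}=1-y'_\ell$, and Property~3 of the same theorem combined with constraint~\eqref{LP:k-median-size} yields $\sum_{\ell\in K}y'_\ell\le\sum_{\ell\in K}\sum_{j\in V_\ell}y_j=\sum_{j\in[m]}y_j\le k$, so
\[
\sum_{\ell\in K} x'_{\ell\sigma(\ell)} \;=\; |K|-\sum_{\ell\in K} y'_\ell \;\ge\; |K|-k.
\]
Taking $K_1$ to be whichever of $A,B$ holds the larger share of this sum (and $K_2$ the other) gives $\sum_{\ell\in K_1} x'_{\ell\sigma(\ell)}\ge(|K|-k)/2$. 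The one subtle step is ruling out cycles of length $\ge 3$; this relies on using a consistent tiebreaking rule for $\sigma$, since, e.g., four cocircular equidistant points could otherwise admit a $4$-cycle. The rest is standard bipartite bookkeeping.
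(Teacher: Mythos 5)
Your proof is correct and follows the same route as the paper: extract a proper $2$-coloring $(K_1,K_2)$ of the $\sigma$-graph, observe that $\sum_{\ell\in K}x'_{\ell\sigma(\ell)} = |K|-\sum_{\ell\in K}y'_\ell \geq |K|-k$, and keep whichever color class captures at least half the total. The only place you go beyond the paper is in actually \emph{proving} that consistent tiebreaking forces the edges $(\ell,\sigma(\ell))$ to be acyclic (equivalently, that every directed cycle of the functional digraph is a $2$-cycle) --- the paper asserts this without argument, and your equidistance-plus-total-order chain is a correct justification of it.
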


For completeness, we provide proofs of Theorem~\ref{thm:Charikar-condensed} and Observation~\ref{obs:Charikar} in Appendix~\ref{apx:proof-Charikar}.

\section{A Randomized Rounding}\label{sec:random}
In this section, we describe two steps that will be used in the rounding algorithms for both parameter regimes. All our algorithms will begin by applying the reduction from Theorem~\ref{thm:Charikar-condensed} for $\gamma=1/5$. Let $K$ be the set of points obtained by applying this reduction, let $(x',y',z')$ be the corresponding convex programming solution, as described in the theorem, and let $K_1$ be as in Observation~\ref{obs:Charikar}. Note that if $|K|\leq k$, then the current centers already give a constant factor approximation (since their cost in the new instance is $0$, and so by Property~\eqref{eq:costI-leq-costI-prime} 
in Theorem~\ref{thm:Charikar-condensed}, the cost in the original instance is at most $2B/\gamma^{1/\nu}\leq 10 B$ (and we can simply add $|K|-k$ centers to $K$ from $[m]\setminus K$ in the original instance only reducing the cost). Thus, we assume that $|K|>k$ below. Consider the following rounding:
\begin{itemize}
    \item Let $$K'=\left\{\ell\in K_1\;\left|\; x'_{\ell\sigma(\ell)}\geq 
    \frac{|K|-k}{4|K_1|}\right.\right\},$$
    and note by Observation~\ref{obs:Charikar} that $$\sum_{\ell\in K'}x'_{\ell\sigma(\ell)}\geq\sum_{\ell\in K_1}x'_{\ell\sigma(\ell)}-\frac{|K|-k}{4|K_1|}\cdot|K_1\setminus K'|\geq
    \frac{|K|-k}{2} - \frac{|K|-k}{4}\geq\frac{|K|-k}{4}.$$
    \item Let $L = K'$. Independently close (remove from $L$) every center $\ell\in K'$ with probability $5x'_{\ell\sigma(\ell)}$ (recall that $x'_{\ell\sigma(\ell)}\leq\gamma=1/5$ by item 2 of Theorem~\ref{thm:Charikar-condensed}).
    \item If $|L|<k$, reopen (add to $L$) an arbitrary collection of $k-|L|$ centers at no additional cost.
\end{itemize}

The expected number of centers that we close in the second step is $5\sum_{\ell\in K'}x'_{\ell\sigma(\ell)}\geq \frac54 (|K|-k)$. By a Chernoff bound this number is at least $|K|-k$ centers w.h.p.\ (unless $|K|-k$ is bounded by some sufficiently large constant, in which case we can enumerate over all possible solutions to get a constant factor approximation). While the analysis of the cost of this rounding depends on the specific parameter regime we are in, we introduce the following notation which we will use in both cases. For every group $i$ and center $\ell\in K'$, we define random variables specifying the per-center and total costs incurred to every group $i\in[n]$ by this rounding:
\begin{equation}
Z_{i\ell}:=\left\{\begin{array}{ll}\vol_i(V_\ell)&\text{if we close center }\ell,\\0&\text{otherwise,}\end{array}\right.\qquad\text{and}\qquad Z_i=\sum_{\ell\in K'}Z_{i\ell}.
\label{def:random-vars-Z-new}
\end{equation}

For the simpler analysis in Section~\ref{sec:p-geq-q-approx}, it will suffice to analyze the cost in the new instance $\cal I'$ (produced by the reduction from Theorem~\ref{thm:Charikar-condensed}), for which we define the following variables:

\begin{equation}
Z'_{i\ell}:=\left\{\begin{array}{ll}w'_i(\ell)d(\ell,\sigma(\ell))^p&\text{if we close center }\ell,\\0&\text{otherwise,}\end{array}\right.\qquad\text{and}\qquad Z'_i=\sum_{\ell\in K'}Z'_{i\ell}.
\label{def:random-vars-Z-prime}
\end{equation}

\iffalse
\begin{equation}
Z_{i\ell}:=\left\{\begin{array}{ll}\sum_{j\in }w'_i(\ell)d(\ell,\sigma(\ell))^p&\text{if we close center }\ell,\\0&\text{otherwise,}\end{array}\right.\qquad\text{and}\qquad Z_i=\sum_{\ell\in K'}Z_{i\ell}.
\label{def:random-vars-Z}
\end{equation}
\fi

\begin{claim}
\label{claim:z-cost}
  The $(\ell_p,\ell_q)$-cost in $\cal I$ of the clustering found by randomized rounding is bounded by\footnote{Note that terms $30B$ and $10B$ only add a constant to the approximation factor of the randomized rounding scheme, since $B$ is at most the cost of the optimal clustering. Thus, the main challenge will be to upper bound either $\left(\sum_{i=1}^nZ_i^{q/p}\right)^{1/q}$ or $\left(\sum_{i=1}^n(Z'_i)^{q/p}\right)^{1/q}$.}
  $$
  \min\left\{
     30 B+2\left(\sum_{i=1}^nZ_i^{q/p}\right)^{1/q},
     10 B+\left(\sum_{i=1}^n(Z'_i)^{q/p}\right)^{1/q}\right\}
  $$
\end{claim}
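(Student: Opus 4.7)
The plan is to establish each of the two upper bounds separately and take the minimum. Let $\Lambda \subseteq K'$ denote the (random) set of centers removed in the second step of the rounding, and note two structural facts. First, since $K' \subseteq K_1$ and Observation~\ref{obs:Charikar} implies $\sigma(K_1) \subseteq K_2$, every $\ell \in \Lambda$ satisfies $\sigma(\ell) \in K_2$, and in particular $\sigma(\ell) \notin \Lambda$. Second, the final set of centers $L$ always contains $K \setminus \Lambda$, since centers outside $K'$ are never closed and the reopening step can only grow $L$. As adding centers cannot increase the cost, $\gencost^{\cal I}(L) \leq \gencost^{\cal I}(K \setminus \Lambda)$, so it suffices to bound $\gencost^{\cal I}(K \setminus \Lambda)$ by each of the two expressions appearing in the claim.

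For the first bound, I would apply Property~5 of Theorem~\ref{thm:Charikar-condensed} directly to $\Lambda$, whose hypothesis $\sigma(\ell) \notin \Lambda$ was just verified. By the definition of $Z_{i\ell}$ in~(\ref{def:random-vars-Z-new}), one has $\sum_{\ell \in \Lambda} \vol_i(V_\ell) = Z_i$. With $\gamma = 1/5$, the constant $6/\gamma^{1/\nu}$ appearing in Property~5 evaluates to $30$, and the property yields
$$\gencost^{\cal I}(K \setminus \Lambda) \leq 30 B + 2 \left(\sum_{i=1}^n Z_i^{q/p}\right)^{1/q}.$$

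For the second bound, I would first bound $\gencost^{\cal I'}(K \setminus \Lambda)$ inside the reduced instance and then transfer back to $\cal I$. The weights $w'_i$ in $\cal I'$ are supported on $K$; for $\ell \in K \setminus \Lambda$ the distance to the open set is $0$, while for $\ell \in \Lambda$ the point $\sigma(\ell) \in K_2 \subseteq K \setminus \Lambda$ is available as a center, so $d(\ell, K \setminus \Lambda) \leq d(\ell, \sigma(\ell))$. Summing over the groups,
$$\cost^{\cal I'}(K \setminus \Lambda, w'_i)^p \leq \sum_{\ell \in \Lambda} w'_i(\ell) \, d(\ell, \sigma(\ell))^p = Z'_i,$$
and taking the $\ell_q$-norm gives $\gencost^{\cal I'}(K \setminus \Lambda) \leq \left(\sum_i (Z'_i)^{q/p}\right)^{1/q}$. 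Since $K \setminus \Lambda \subseteq K$, applying inequality~(\ref{eq:costI-leq-costI-prime}) of Theorem~\ref{thm:Charikar-condensed} with $\gamma = 1/5$ (so that $2/\gamma^{1/\nu} = 10$) produces $\gencost^{\cal I}(K \setminus \Lambda) \leq 10 B + \left(\sum_i (Z'_i)^{q/p}\right)^{1/q}$.

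The claim then follows by taking the minimum of the two bounds. There is no serious analytic obstacle; the only delicate point is the bookkeeping that lets us apply Theorem~\ref{thm:Charikar-condensed}. Specifically, one must confirm that the closures happen only inside $K_1$ so that their $\sigma$-partners lie in $K_2 \subseteq K \setminus \Lambda$ (giving both the Property~5 hypothesis and the fallback center used in the $\cal I'$-cost estimate), and one must exploit cost monotonicity in order to replace the final $L$ by the clean set $K \setminus \Lambda$ to which the theorem's properties directly apply.
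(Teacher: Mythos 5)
Your proof is correct and follows essentially the same route as the paper: close the closures to get $\Lambda$ (the removed centers), verify $\sigma(\Lambda)\cap\Lambda=\emptyset$ via $K'\subseteq K_1$ and $\sigma(K_1)\subseteq K_2$, apply Property~5 for the first bound, and compute the ${\cal I}'$-cost directly and transfer via inequality~(\ref{eq:costI-leq-costI-prime}) for the second. The one thing you do more carefully than the paper is to isolate the clean set $K\setminus\Lambda$ and invoke cost monotonicity to dispose of the reopening step; the paper implicitly identifies $K\setminus L$ with the closed centers, whereas your step makes this reduction explicit, which is a harmless and slightly tidier way to handle the same argument.
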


\begin{proof}
Note that if we close a center $\ell$ then we do not close $\sigma(\ell)$, since (i) we only close centers  $\ell$ in $K'\subseteq K_1$ and (ii) if $\ell\in K'$ then $\sigma(\ell)\in K_2$. 
Therefore, we can apply item 5 of Theorem~\ref{thm:Charikar-condensed}  with $\Lambda = K\setminus L$ (the set of centers we closed). We get,
$$\gencost^{\cal I}(L)\leq\frac{6}{\gamma^{1/\nu}}B+2\left(\sum_{i=1}^n\left(\sum_{\ell\notin L}\vol_i(V_\ell)\right)^{q/p}\right)^{1/q} = \frac{6}{\gamma^{1/\nu}}B+2\left(\sum_{i=1}^n Z_i^{q/p}\right)^{1/q}.
$$
where $\nu = \min(p,q)$, $\gamma = 1/5$, and thus $6/\gamma^{1\nu} \leq 30$.
Also, the cost of $L$ w.r.t. instance ${\cal I}'$ is 
$$\left(\sum_{i=1}^n\left(\sum_{\ell\in K} w_i'(\ell) d(\ell, L)^p\right)^{q/p}\right)^{1/q} = 
\left(\sum_{i=1}^n\left(\sum_{\ell\in K\setminus L} w_i'(\ell) d(\ell, \sigma(\ell))^p\right)^{q/p}\right)^{1/q}
=\left(\sum_{i=1}^n\left(Z_i'\right)^{q/p}\right)^{1/q}.
$$ 
Here, we used that $d(\ell, L)=0$ if $\ell$ is not closed and $d(\ell, L) = d(\ell, \sigma(\ell))$, otherwise.
By Theorem~\ref{thm:Charikar-condensed}, item 4, the cost of $L$ w.r.t.\ instance $\cal I$ is
$$\gencost^{\cal I}(L) \leq 
\frac{2}{\gamma^{1/\nu}}B+\left(\sum_{i=1}^n(Z'_i)^{q/p}\right)^{1/q} \leq 10 B+\left(\sum_{i=1}^n(Z'_i)^{q/p}\right)^{1/q}.$$
\end{proof}

\section{Approximation Algorithm for the Case \texorpdfstring{$p\geq q$}{p>q}}\label{sec:p-geq-q-approx}

In this section, we present our rounding algorithm for the convex program from Section~\ref{sec:q-le-p-relaxation}.

We start out by applying the reduction described in Theorem~\ref{thm:Charikar-condensed}. We now consider two cases. If $|K|-k\geq\sqrt{k}$, then we apply the randomized rounding from Section~\ref{sec:random}. Otherwise, we re-weight the points and run an approximation algorithm for $k$-Clustering with $\ell_q$-norm objective. Let us first analyze the the performance of randomized rounding when $p\geq q$. 

\begin{lemma}
  The expected cost of the randomized rounding is at most $O\left(B\cdot(k/(|K|-k))^{\frac{p-q}{pq}}\right)$.
  \label{lem:analysis-p-geq-q}
\end{lemma}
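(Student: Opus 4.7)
The plan is to use the simpler of the two bounds provided by Claim~\ref{claim:z-cost}, namely $\gencost^{\cal I}(L)\le 10B+\bigl(\sum_{i}(Z'_i)^{q/p}\bigr)^{1/q}$, and to bound the random quantity by linking the expected per-center cost to the CP cost variable $z'_i$ via Constraint~\eqref{LP:q-le-p-cost} applied to the reduced instance. Since $x'_{\ell j}$ is supported on $j\in\{\ell,\sigma(\ell)\}$ by item~2 of Theorem~\ref{thm:Charikar-condensed}, Constraint~\eqref{LP:q-le-p-cost} for $\cal I'$ simplifies, after raising to the $p/q$, to
\[
(z'_i)^{p/q}\;\ge\;\sum_{\ell\in K}w'_i(\ell)\,d(\ell,\sigma(\ell))^p\bigl(x'_{\ell\sigma(\ell)}\bigr)^{p/q}.
\]
On the other hand, the closing probability of $\ell\in K'$ is $5x'_{\ell\sigma(\ell)}$, so $\E[Z'_i]=5\sum_{\ell\in K'}w'_i(\ell)d(\ell,\sigma(\ell))^p x'_{\ell\sigma(\ell)}$, which differs from the right-hand side above by a factor of $(x'_{\ell\sigma(\ell)})^{1-p/q}$ per term.

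\textbf{Key algebraic step.} This is where the floor defining $K'$ is used. Set $\delta:=(|K|-k)/(4|K_1|)$, so that $x'_{\ell\sigma(\ell)}\ge\delta$ for every $\ell\in K'$. Since $p/q\ge 1$, the map $t\mapsto t^{p/q-1}$ is nondecreasing on $[0,1]$, whence $(x'_{\ell\sigma(\ell)})^{p/q}\ge \delta^{p/q-1}\cdot x'_{\ell\sigma(\ell)}$ on $K'$. Plugging this into the CP inequality and comparing with $\E[Z'_i]$ yields
\[
\E[Z'_i]\;\le\;5\,\delta^{1-p/q}\,(z'_i)^{p/q}.
\]
Now I apply Jensen to the concave map $t\mapsto t^{q/p}$ (valid because $q/p\le 1$) to get $\E[(Z'_i)^{q/p}]\le (\E[Z'_i])^{q/p}\le 5^{q/p}\delta^{q/p-1}z'_i$. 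Summing over $i$ and using item~1 of Theorem~\ref{thm:Charikar-condensed}, which gives $\sum_i z'_i\le(2B)^q$, yields $\E\bigl[\sum_i(Z'_i)^{q/p}\bigr]\le 5^{q/p}\delta^{q/p-1}(2B)^q$.

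\textbf{Putting it together.} Applying Jensen once more to the concave map $t\mapsto t^{1/q}$ (valid because $q\ge 1$) and taking the $1/q$-th power of the bound above,
\[
\E\!\left[\Bigl(\sum_{i}(Z'_i)^{q/p}\Bigr)^{1/q}\right]\;\le\;2\cdot 5^{1/p}\,\delta^{1/p-1/q}\,B\;=\;O\!\left(\bigl(1/\delta\bigr)^{(p-q)/(pq)}B\right).
\]
Finally, using $|K_1|\le|K|\le k/(1-\gamma)=5k/4$, we have $1/\delta=4|K_1|/(|K|-k)\le 5k/(|K|-k)$, so the quantity above is $O\!\left(B\,(k/(|K|-k))^{(p-q)/(pq)}\right)$. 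Combining with Claim~\ref{claim:z-cost} and absorbing the $10B$ term (which is permissible because $k/(|K|-k)\ge 4$ under the case assumption $|K|\le 5k/4$) gives the claimed bound.

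\textbf{Main obstacle.} The delicate point is Step~2: the CP value $z'_i$ controls only the super-linear quantity $\sum_\ell w'_i(\ell)d(\ell,\sigma(\ell))^p(x'_{\ell\sigma(\ell)})^{p/q}$, while randomized rounding produces a linear expected cost. Bridging these two is precisely why the algorithm restricts attention to $K'$, whose definition supplies the lower bound $x'_{\ell\sigma(\ell)}\ge\delta$ needed to convert one expression into the other, and this conversion is where the $(k/(|K|-k))^{(p-q)/(pq)}$ loss comes from.
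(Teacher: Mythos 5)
Your proof is correct and follows essentially the same route as the paper: reduce via Claim~\ref{claim:z-cost} to bounding $\expec\bigl[(\sum_i(Z'_i)^{q/p})^{1/q}\bigr]$, apply Jensen twice, and convert the linear $\expec[Z'_i]$ into the super-linear quantity controlled by Constraint~\eqref{LP:q-le-p-cost} using the lower bound $x'_{\ell\sigma(\ell)}\geq\delta$ built into the definition of $K'$. The only cosmetic difference is that you carry the abstract $\delta=(|K|-k)/(4|K_1|)$ through the calculation and only substitute $1/\delta\leq 5k/(|K|-k)$ at the end, whereas the paper substitutes earlier; both give the same bound.
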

\begin{proof}
 By Claim~\ref{claim:z-cost}, it suffices to bound the expectation of $\left(\sum_{i=1}^n(Z_i')^{q/p}\right)^{1/q}$. 
Let us do that now. By Jensen's inequality, and linearity of expectation, we have
\begin{align}\expec\left[\left(\sum_{i=1}^n (Z_i')^{q/p}\right)^{1/q}\right]\leq \left(\expec\left[\sum_{i=1}^n (Z_i')^{q/p}\right]\right)^{1/q}&=\left(\sum_{i=1}^n\expec\left[ (Z_i')^{q/p}\right]\right)^{1/q}%\nonumber\\
%&
\leq \left(\sum_{i=1}^n\expec\left[Z_i'\right]^{q/p}\right)^{1/q}\label{eq:q-le-p-rand-round-proof}
\end{align}

Let us now bound the expectation $\expec[Z_i']$. First, recall that by definition of $K'$ and Theorem~\ref{thm:Charikar-condensed}, for every $\ell\in K'$, we have
\begin{equation}
x'_{\ell\sigma(\ell)}\geq\frac{|K|-k}{4|K_1|}\geq\frac{|K|-k}{4|K|}\geq\frac{|K|-k}{4k/(1-\gamma)}=\frac{|K|-k}{5k}.\label{eq:x-prime-pruning}
\end{equation}
From the definition of $Z_i'$, using that $(x',y',z')$ satisfies Constraint~\eqref{LP:q-le-p-cost} by Theorem~\ref{thm:Charikar-condensed}, we have 
\begin{align*}
    \expec[Z_i']&=\sum_{\ell\in K'}\expec\left[Z'_{i\ell}\right]=5\sum_{\ell\in K'}x'_{\ell\sigma(\ell)} w'_i(\ell)d(\ell,\sigma(\ell))^p\\
    &\leq5\left(\frac{5k}{|K|-k}\right)^{\frac{p-q}{q}}\sum_{\ell\in K'}(x'_{\ell\sigma(\ell)})^{p/q} w'_i(\ell)d(\ell,\sigma(\ell))^p &\text{by~\eqref{eq:x-prime-pruning}}\\
    &=5\left(\frac{5k}{|K|-k}\right)^{\frac{p-q}{q}} (z'_i)^{p/q}.&\text{by Constraint~\eqref{LP:q-le-p-cost}}
\end{align*}

Plugging this bound back into~\eqref{eq:q-le-p-rand-round-proof}, we get 
$$\expec\left[\left(\sum_{i=1}^n (Z_i')^{q/p}\right)^{1/q}\right]\leq
\left(5^{q/p} \left(\frac{5k}{|K|-k}\right)^{\frac{p-q}{p}} z_i'\right)^{1/q}=
5^{1/p}\left(\frac{5k}{|K|-k}\right)^{\frac{p-q}{pq}}\cdot (\sum_{i=1}^n z_i')^{1/q},$$
By Constraint~\eqref{LP:q-le-p-obj-fun}
and Theorem~\ref{thm:Charikar-condensed}
$(\sum_{i=1}^n z_i')^{1/q} \leq 2B$. We  get 
$$\expec\left[\left(\sum_{i=1}^n (Z_i')^{q/p}\right)^{1/q}\right]\leq 10\left(\frac{5k}{|K|-k}\right)^{\frac{p-q}{pq}}\cdot B,$$
as required.
\end{proof}

Thus, as mentioned earlier, the randomized rounding indeed gives the desired approximation when $|K|-k\geq \sqrt{k}$. Let us see a different rounding algorithm, which gives the desired guarantee when $|K|-k\leq \sqrt{k}$. In this rounding, we define a new weight function $\hat{w}:K\rightarrow\reals_{\geq 0}$ as follows: $$\hat{w}(\ell):=\sum_{i=1}^n w'_i(\ell)^{q/p}.$$

Recall that for any $q\in [1,\infty)$, $k$-Clustering with $\ell_q$-cost can be approximated up to a constant factor (see Remark~\ref{rem:approx-lp-clustering}). Our rounding algorithm in this case is simple:

\begin{itemize}
    \item Apply a constant-factor approximation for $k$-Clustering with $\ell_q$-cost to the current input (on $K$) with new weights $\hat{w}$ and return the set of centers $L$ chosen by this algorithm. 
\end{itemize}

Let us analyze the approximation guarantee. Recall that the new instance (obtained by the reduction from Theorem~\ref{thm:Charikar-condensed}) has optimum value $B'=O(\gamma^{-1/q})(B^*+B)=O(B^*)$, where $B^*$ is the optimum value of the original instance. Then for the above algorithm applied to the new instance, we have the following guarantee.
\begin{lemma}
  The cost $\gencost^{{\cal I}'}(L)$ is at most $O\left(B'\cdot(|K|-k)^{\frac{p-q}{pq}}\right)$.
  \label{lem:analysis-l-leq-q-k-median}
\end{lemma}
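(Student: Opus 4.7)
The plan is to bound $\gencost^{{\cal I}'}(L)$ by passing through the $\ell_q$-weighted $k$-Clustering cost on $(K, \hat{w})$ and then exploiting the CP solution $(x',y',z')$. Since $q \leq p$, the map $x \mapsto x^{q/p}$ is subadditive on $\reals_{\ge 0}$, so
\[
\gencost^{{\cal I}'}(L)^q \;=\; \sum_{i=1}^n \Bigl(\sum_{\ell \in K} w'_i(\ell)\, d(\ell, L)^p\Bigr)^{q/p} \;\leq\; \sum_{\ell \in K} \hat{w}(\ell)\, d(\ell, L)^q.
\]
By the constant-factor approximation guarantee for $k$-Clustering with $\ell_q$-cost, the right-hand side is at most $\alpha \cdot \mathrm{OPT}^{\mathrm{w}}_q$, where $\mathrm{OPT}^{\mathrm{w}}_q$ denotes the integer optimum of the weighted $\ell_q$-$k$-Clustering problem on $(K, \hat{w})$.

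To bound $\mathrm{OPT}^{\mathrm{w}}_q$, I would exhibit a specific candidate solution $L^{\star} = K \setminus S^{\star}$ with $S^{\star} \subseteq K'$ and $|S^{\star}| = |K|-k$, which is feasible because $|K'| \geq \tfrac{5}{4}(|K|-k)$ by the setup preceding Claim~\ref{claim:z-cost}. Since $K' \subseteq K_1$ and $\sigma(K_1) \subseteq K_2 \subseteq L^{\star}$, every $\ell \in S^{\star}$ satisfies $d(\ell, L^{\star}) \leq d(\ell, \sigma(\ell))$, so the cost of $L^{\star}$ is at most $\sum_{\ell \in S^{\star}} \hat{w}(\ell)\, d(\ell, \sigma(\ell))^q$. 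Expanding $\hat{w}$ and applying Jensen's inequality to each group's contribution (concavity of $x^{q/p}$) produces the key factor $|S^{\star}|^{1 - q/p} = (|K|-k)^{(p-q)/p}$, and the inner sums $\sum_{\ell \in S^{\star}} w'_i(\ell)\, d(\ell, \sigma(\ell))^p$ can then be bounded in terms of $(z'_i)^{p/q}$ by combining the CP constraint $(z'_i)^{p/q} \geq \sum_\ell w'_i(\ell)(1-y'_\ell)^{p/q} d(\ell, \sigma(\ell))^p$ with the lower bound $(1-y'_\ell) \geq (|K|-k)/(5k)$ available for $\ell \in K'$ (inequality~\eqref{eq:x-prime-pruning}). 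Summing the resulting per-group bound against $\sum_i z'_i \leq (B')^q$ and taking the $q$-th root then yields the claimed estimate.

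The main obstacle will be carrying out the last step while recovering the tight $(|K|-k)^{(p-q)/(pq)}$ exponent rather than a spurious $k$-dependent factor. Naively replacing $(1-y'_\ell)$ by the worst-case bound $(|K|-k)/(5k)$ introduces a factor of $5k/(|K|-k)$, leaving a residual $(k/(|K|-k))^{1/q}$ in the final estimate that exceeds what the lemma promises. To remove it, I expect one must either average $S^{\star}$ over random $(|K|-k)$-subsets of $K'$ so that the $(|K|-k)/|K'|$ averaging factor absorbs the spurious $k$, or, equivalently, bound the $\ell_q$-weighted LP value $\sum_\ell \hat{w}(\ell)(1-y'_\ell)\, d(\ell,\sigma(\ell))^q$ of $(x',y')$ directly: rewriting it as $\sum_i \sum_\ell b_\ell^{q/p}$ with $b_\ell = w'_i(\ell)(1-y'_\ell)^{p/q} d(\ell,\sigma(\ell))^p$, so that $\sum_\ell b_\ell \leq (z'_i)^{p/q}$, and then applying Jensen to $\sum_\ell b_\ell^{q/p}$ with an effective number of nonzero terms proportional to $|K|-k$ rather than the naive $|K|$. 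This is where the structural bound $|K'| = \Theta(|K|-k)$ (modulo a constant) must be exploited, and it is the crux of the proof.
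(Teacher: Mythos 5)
Your opening step, bounding $\gencost^{{\cal I}'}(L)^q \leq \sum_{\ell\in K}\hat{w}(\ell)\,d(\ell,L)^q$ via subadditivity of $x\mapsto x^{q/p}$, matches the paper exactly, and reducing to a bound on the weighted $\ell_q$ optimum $\mathrm{OPT}^{\mathrm{w}}_q$ is also the right move. From there, however, the two arguments diverge, and your route has a gap that your proposed fixes do not close.

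The paper does \emph{not} bound $\mathrm{OPT}^{\mathrm{w}}_q$ by constructing a candidate solution from the CP values. Instead it proves a two-sided, purely combinatorial sandwich: for \emph{every} set $\hat{K}\subseteq K$ of $k$ centers, the weighted $\ell_q$-cost $\bigl(\sum_{\ell\in K}\hat{w}(\ell)\,d(\ell,\hat{K})^q\bigr)^{1/q}$ lies between $\gencost^{{\cal I}'}(\hat{K})$ and $(|K|-k)^{(p-q)/(pq)}\,\gencost^{{\cal I}'}(\hat{K})$. The upper direction is exactly the observation you needed but did not make: since $d(\ell,\hat{K})=0$ whenever $\ell\in\hat{K}$, the weighted $\ell_q$-cost is a sum over $K\setminus\hat{K}$, which has \emph{exactly} $|K|-k$ terms, and H\"older over that index set (with $1^{p/(p-q)}$) gives the factor $|K\setminus\hat{K}|^{(p-q)/(pq)}=(|K|-k)^{(p-q)/(pq)}$ on the nose. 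Applying this at the integer optimum of ${\cal I}'$ bounds $\mathrm{OPT}^{\mathrm{w}}_q\leq (|K|-k)^{(p-q)/(pq)}B'$, with no reference to $(x',y',z')$ and no residual in $k$.

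Your construction of the specific solution $L^\star=K\setminus S^\star$ is sound as far as it goes (feasibility, $\sigma(\ell)\in K_2\subseteq L^\star$, and the inner CP constraint $(z'_i)^{p/q}\geq\sum_\ell w'_i(\ell)(1-y'_\ell)^{p/q}d(\ell,\sigma(\ell))^p$ are all correct). But, as you already observe, stripping the $(1-y'_\ell)^{p/q}$ factors using the pointwise bound $1-y'_\ell\geq(|K|-k)/(5k)$ introduces a residual $\bigl(k/(|K|-k)\bigr)^{1/q}$. In the regime where this lemma is actually invoked ($0<|K|-k\leq\sqrt{k}$), this residual is at least $k^{1/(2q)}$, which is the same order as the entire approximation factor being proved. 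Your fix~\#1 (averaging $S^\star$ over random $(|K|-k)$-subsets of $K'$) yields $\mathrm{OPT}^{\mathrm{w}}_q\lesssim\frac{k}{|K'|^{q/p}}(B')^q$ after the same accounting, and since we only have the lower bound $|K'|\geq\frac54(|K|-k)$, this is again off by a $\bigl(k/(|K|-k)\bigr)^{1/q}$ factor. Your fix~\#2 hinges on the claimed structural bound $|K'|=\Theta(|K|-k)$, but this is false: $K'$ is defined as $\{\ell\in K_1:x'_{\ell\sigma(\ell)}\geq(|K|-k)/(4|K_1|)\}$, and nothing prevents $K'$ from being all of $K_1$, i.e.\ of size $\Theta(k)$. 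So the ``effective support of size $|K|-k$'' in $\sum_\ell b_\ell^{q/p}$ does not materialize. The crux is that the tight $(|K|-k)$ exponent should come from the cardinality of $K\setminus\hat{K}$ for the integer optimum $\hat{K}$ of ${\cal I}'$, not from the CP solution; once you route the bound through $(x',y',z')$ that information is lost.
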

\begin{proof}
Since we use a constant-factor approximation for the $\ell_q$ objective, it suffices to show that for every set $\hat K\subset K$ of $k$ centers, the following values
\begin{itemize}
\item the $(\ell_p, \ell_q)$-cost of $\hat K$ w.r.t.\ the original weights $w_i'$ (from the reduction of Theorem~\ref{thm:Charikar-condensed}), and
\item the $\ell_q$-cost of $\hat K$ w.r.t.\ the new weights $\hat w_i$
\end{itemize}
are within a factor of $(|K|-k)^\frac{p-q}{pq}$ of each other.
Indeed, let $\hat{K}\subseteq K$ be any set of $k$ centers. Then
\begin{align*}
\gencost^{{\cal I}'}(\hat{K})&=\left(\sum_{i=1}^n\left(\sum_{\ell\in K}w'_i(\ell)d(\ell,\hat{K})^p\right)^{q/p}\right)^{1/q}
\\&\leq \left(\sum_{i=1}^n\sum_{\ell\in K}w'_i(\ell)^{q/p}d(\ell,\hat{K})^q\right)^{1/q}&\text{since }\|\cdot\|_{p/q}\leq\|\cdot\|_1\\
&=\biggl(\sum_{\ell\in K}\underbrace{\left(\sum_{i=1}^nw'_i(\ell)^{q/p}\right)}_{\hat w(\ell)}d(\ell,\hat{K})^{q}\biggr)^{1/q},
\end{align*}
which is exactly the $\ell_q$ objective of our new instance applied to this set of centers.

On the other hand, again for any set $\hat{K}\subseteq K$ of $k$ centers,  we can bound the $q$-norm objective of the instance with weights $\hat w$ as follows
\begin{align*}
\biggl(\sum_{\ell\in K}\hat{w}(\ell) d(\ell,\hat{K})^{q}\biggr)^\frac1q &=\left(\sum_{\ell\in K\setminus\hat{K}}\hat{w}(\ell)d(\ell,\hat{K})^{q}\right)^\frac1q =\left(\sum_{\ell\in K\setminus\hat{K}} \sum_{i=1}^nw'_i(\ell)^{q/p}d(\ell,\hat{K})^{q}\right)^\frac1q\\
    &=\left(\sum_{i=1}^n\sum_{\ell\in K\setminus\hat{K}}w'_i(\ell)^{q/p}d(\ell,\hat{K})^q\right)^\frac1q\\
    &\stackrel{\text{\tiny by H\"older}}{\leq}\left(\sum_{i=1}^n\left(\sum_{\ell\in K\setminus\hat{K}} 1^{\frac{p}{p-q}}\right)^{\frac{p-q}{p}}\left(\sum_{\ell\in K\setminus\hat{K}}w'_i(\ell)d(\ell,\hat{K})^{p}\right)^\frac{q}{p}\right)^\frac1q\\
    &=|K\setminus\hat{K}|^{\frac{p-q}{pq}}\left(\sum_{i=1}^n\left(\sum_{\ell\in K}w'_i(\ell)d(\ell,\hat{K})^{p}\right)^{\frac{q}{p}}\right)^\frac1q
    =(|K|-k)^{\frac{p-q}{pq}}\gencost^{{\cal I}'}(\hat{K}),
\end{align*}
and the proof follows.
\end{proof}
\begin{proof}[Proof of Theorem~\ref{thm:main-p>q}.] We solve the convex programming relaxation for the problem and then apply the reduction from Theorem~\ref{thm:Charikar-condensed}. If the set of points $K$ found by the reduction is of cardinality $|K|\leq k$, we add $k-|K|$ points from $[m]$ and return the resulting set, which as noted earlier has cost $O(B)$. Otherwise, if $|K|-k>\sqrt{k}$, we run the randomized rounding procedure, which by Lemma~\ref{lem:analysis-p-geq-q} yields an $O(k^{(p-q)/(2pq)})$-approximation. Finally, if $0<|K|-k\leq\sqrt{k}$, we run a constant-factor approximation for the $\ell_q$ norm $k$-Clustering with weights $\hat w$ and obtain a solution $L$.
Let $B_{{\cal I}'}^*$ be the $(\ell_p, \ell_q)$-cost of the optimal solution for ${\cal I}'$ and $B^*$ be the $(\ell_p, \ell_q)$-cost of the optimal solution for $\cal I$. 
By Lemma~\ref{lem:analysis-l-leq-q-k-median}, 
$$\gencost^{{\cal I}'}(L) \leq 
O(k^{(p-q)/(2pq)}) B_{{\cal I}'}^*.$$
By Theorem~\ref{thm:Charikar-condensed}, item 4:
\begin{align*}\gencost(L) &\stackrel{\phantom{\text{\tiny \eqref{eq:costI-leq-costI-prime}}}}{\leq} \gencost^{{\cal I}'}(L) + 10 B\\
&\stackrel{\text{\tiny \eqref{eq:costI-leq-costI-prime}}}{\leq} O(k^{(p-q)/(2pq)}) B_{{\cal I}'}^* + 10 B\\ &\stackrel{\text{\tiny \eqref{eq:costI-prime-leq-costI}}}{\leq}
O(k^{(p-q)/(2pq)}) (2 B^* + 20B) + 10 B = O(k^{(p-q)/(2pq)}) B^*,
\end{align*}
here we used that $B \leq B^*$. We conclude that the algorithm gives an $O(k^\frac{p-q}{pq})$ approximation, as required.
\end{proof}

\section{Approximation Algorithm for the Case \texorpdfstring{$p\leq q$}{p<q}}\label{sec:q-geq-p-approx}
In this section, we upper bound the cost of the solution produced by the rounding procedure.
In the analysis, we will use Latała's inequality.

\begin{theorem}[\cite{Latala}, Corollary 3]
There exists a universal constant $M$ such that if $Z_1, \dots, Z_N$ are independent non-negative random variables and $\alpha \geq 1$,
then
$$
\left(\expec\Bigl[\Bigl(\sum_{i=1}^N Z_i\Bigr)^\alpha\Bigr]\right)^{1/\alpha}
\leq \frac{M \alpha}{\ln(1 + \alpha)}\max\Bigl(
\sum_{i=1}^N \expec[Z_i] , \Bigl(\sum_{i=1}^N \expec[Z_i^\alpha]\Bigr)^{1/\alpha} \Bigr).
$$
\end{theorem}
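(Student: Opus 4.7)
The plan is to prove the moment inequality via a two-part truncation of each $Z_i$ at a carefully chosen threshold, handling the bounded \emph{bulk} by a Bennett-type concentration inequality and the \emph{large jumps} by direct moment estimates. Write $A := \sum_i \expec[Z_i]$ and $T := \bigl(\sum_i \expec[Z_i^\alpha]\bigr)^{1/\alpha}$, so the target upper bound is $O\bigl(\alpha/\ln(1+\alpha)\bigr)\cdot\max(A,T)$. Choose the truncation level $\tau := T$ and decompose $Z_i = Z_i' + Z_i''$ with $Z_i' = Z_i \1\{Z_i \leq \tau\}$ and $Z_i'' = Z_i \1\{Z_i > \tau\}$. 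By Minkowski in $L^\alpha$, it suffices to bound each of $\expec\bigl[(\sum_i Z_i')^\alpha\bigr]^{1/\alpha}$ and $\expec\bigl[(\sum_i Z_i'')^\alpha\bigr]^{1/\alpha}$ separately.

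For the tail sum $\sum_i Z_i''$, I would use the identity $\expec[X^\alpha] = \alpha\int_0^\infty u^{\alpha-1}\Pr[X > u]\,du$ together with the Markov estimate $\Pr[Z_i > u] \leq \expec[Z_i^\alpha]/u^\alpha$. The key observation is that, by the choice of $\tau$, one has $\sum_i \Pr[Z_i > \tau] \leq 1$, so typically at most one or two $Z_i''$ are simultaneously nonzero; a careful union-bound argument then gives an $O(T)$ contribution to the $L^\alpha$ norm (a simplified Rosenthal-type estimate for independent non-negative variables). For the bulk $\sum_i Z_i'$, each summand is bounded by $\tau$, the mean sum is $\leq A$, and the variance sum is $\leq \tau A$ (since $\expec[(Z_i')^2] \leq \tau\expec[Z_i']$). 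Apply Bennett's inequality:
$$\Pr\Bigl[\sum_i Z_i' > A + s\Bigr] \leq \exp\Bigl(-\tfrac{s}{\tau}\,h\!\bigl(\tfrac{s}{A}\bigr)\Bigr),\qquad h(x) := (1+x)\ln(1+x)-x,$$
and substitute into the tail-integral formula. A saddle-point (Laplace-method) computation localizes the integrand near $u^\star \asymp \max(A,\tau)\cdot \alpha/\ln(1+\alpha)$, producing exactly the claimed prefactor.

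The main technical obstacle is achieving the \emph{sharp} constant $\alpha/\ln(1+\alpha)$: a naive Rosenthal-type bound would yield a factor of $\alpha/\ln\alpha$ or even $\alpha$, which is too weak for the intended application in Section~\ref{sec:q-geq-p-approx} (where one needs approximation ratios independent of $n$). The improvement comes precisely from the super-linear Bennett exponent $h(x)\asymp x\ln x$ for large $x$, which makes high moments of Bennett-concentrated sums much smaller than sub-Gaussian intuition would suggest. Latała's original argument instead proceeds by comparing the sum to a Poissonized analogue, which is extremal for the inequality on the class of non-negative summands; the truncation-plus-Bennett route sketched above is a more direct alternative that gives the same sharp dependence on $\alpha$.
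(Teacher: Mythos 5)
Note first that the paper does not prove this statement at all --- it is Corollary~3 of Latała's paper on moments of sums of independent random variables, cited as a black box --- so there is no internal proof to compare against; I will instead assess your sketch on its own terms.

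Your truncate-at-$\tau=T$ plan has a genuine gap in the tail part. The claim that $\bigl\|\sum_i Z_i''\bigr\|_\alpha = O(T)$ is false. Take each $Z_i$ equal to $(1+\eta)T$ with probability $1/\bigl(N(1+\eta)^\alpha\bigr)$ and $0$ otherwise, so that $\sum_i\expec[Z_i^\alpha]=T^\alpha$ exactly. As $\eta\to 0$, one has $\sum_i\Pr[Z_i>\tau]\to 1$, the number of nonzero $Z_i''$ is essentially Poisson$(1)$, and $\expec\bigl[(\sum_iZ_i'')^\alpha\bigr]\approx T^\alpha B_\alpha$ where $B_\alpha$ is the $\alpha$-th Bell number; since $B_\alpha^{1/\alpha}\asymp\alpha/\ln\alpha$, the tail sum has $L^\alpha$ norm of order $T\alpha/\ln\alpha$, not $O(T)$. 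This Poisson-like configuration is in fact the extremal one for the whole inequality, so the work you attribute exclusively to the Bennett analysis of the bulk is equally present, and equally delicate, on the tail side. Relatedly, plugging the Markov estimate $\Pr[Z_i>u]\le\expec[Z_i^\alpha]/u^\alpha$ into $\alpha\int u^{\alpha-1}\Pr[\cdot>u]\,du$ gives a log-divergent integral, so the ``careful union-bound argument'' needs a quantitative refinement that the sketch does not supply.

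Secondarily, the Bennett exponent is mis-stated: with nonnegative summands bounded by $\tau$ and mean-sum $A$, the correct one-sided bound is $\Pr[\sum_i Z_i'>A+s]\le\exp\bigl(-(A/\tau)\,h(s/A)\bigr)$, not $\exp\bigl(-(s/\tau)\,h(s/A)\bigr)$; for $s\gg A$ the latter is stronger by a factor of $s/A$ in the exponent and is not what Bennett/Chernoff actually gives, so the subsequent saddle-point location would come out wrong. For reference, Latała's own proof does not use truncation at all: he derives the two-sided moment characterization $\|\sum_i Z_i\|_\alpha \asymp \inf\{t>0:\sum_i\ln\expec[(1+Z_i/t)^\alpha]\le\alpha\}$ for nonnegative independent summands and extracts Corollary~3 by estimating that infimum, which automatically handles the Poisson extremal case that your tail estimate misses.
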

\noindent We will use Latała's inequality with $\alpha = q/p$. Denote 
$M_{pq} = \left(\frac{Mq}{p\ln(1 + q/p)}\right)^{q/p}$. Then,
\begin{equation}\label{ineq:latala}
\expec\Bigl[\Bigl(\sum_{i=1}^N Z_i\Bigr)^{q/p}\Bigr]
\leq M_{pq}\Bigl(
\Bigl(\sum_{i=1}^N \expec[Z_i]\Bigr)^{q/p} + \sum_{i=1}^N \expec[Z_i^{q/p}]\Bigr).
\end{equation}
\newcommand{\Kalg}{L}
The algorithm for $(p,q)$-Fair Clustering in the case $p \leq q$ simply solves the convex problem, applies the reduction from Theorem~\ref{thm:Charikar-condensed}, and then runs the randomized rounding procedure. We denote the obtained set of centers by $\Kalg$. 
Now we are ready to prove the main result of this section.
\begin{lemma}The rounding procedure outputs a solution for $\cal I$ of cost at most $O\left(\left(\frac{q}{\ln (1 + q/p)}\right)^{1/p}\right)B$ in expectation.
\end{lemma}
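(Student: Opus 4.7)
The plan is to chain together Claim~\ref{claim:z-cost}, Jensen's inequality, Latała's inequality~(\ref{ineq:latala}), and the non-standard CP constraints~(\ref{CP:cluster-bound1}) and~(\ref{CP:cluster-bound2}), which by the rounding separation oracle are guaranteed to hold on the Voronoi partition $(V_\ell)_{\ell\in K}$ produced by the reduction. By Claim~\ref{claim:z-cost}, it suffices to show that $\expec\bigl[(\sum_i Z_i^{q/p})^{1/q}\bigr]=O\bigl((q/\ln(1+q/p))^{1/p}\bigr)\,B$; the additive $30B$ term is absorbed since $q/\ln(1+q/p)\ge p\ge 1$. Applying Jensen to the concave map $x\mapsto x^{1/q}$ (recall $q\ge 1$) reduces this to bounding $\sum_i\expec[Z_i^{q/p}]$.

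For each group $i$, $Z_i=\sum_{\ell\in K'}Z_{i\ell}$ is a sum of independent nonnegative variables, so I would apply Latała's inequality with $\alpha=q/p$ to obtain
$$\expec[Z_i^{q/p}]\ \le\ M_{pq}\Bigl(\bigl(\textstyle\sum_\ell\expec[Z_{i\ell}]\bigr)^{q/p}+\sum_\ell\expec[Z_{i\ell}^{q/p}]\Bigr).$$
Since center $\ell$ is closed with probability $5\,x'_{\ell\sigma(\ell)}$, direct computation gives $\expec[Z_{i\ell}]=5\,x'_{\ell\sigma(\ell)}\vol_i(V_\ell)$ and $\expec[Z_{i\ell}^{q/p}]=5\,x'_{\ell\sigma(\ell)}\vol_i(V_\ell)^{q/p}$.

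The key input is then to control these aggregate sums via the two CP constraints. Using Theorem~\ref{thm:Charikar-condensed} (items~2 and~3), $x'_{\ell\sigma(\ell)}=1-y'_\ell=\max(0,\,1-\sum_{j\in V_\ell}y_j)$, so specializing constraints~(\ref{CP:cluster-bound1}) and~(\ref{CP:cluster-bound2}) to $\Lambda=K'\subseteq K$ yields $\bigl(\sum_{\ell\in K'}x'_{\ell\sigma(\ell)}\vol_i(V_\ell)\bigr)^{q/p}\le z_i$ and $\sum_{\ell\in K'}x'_{\ell\sigma(\ell)}\vol_i(V_\ell)^{q/p}\le z_i$. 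Substituting back gives $\expec[Z_i^{q/p}]\le M_{pq}(5^{q/p}+5)z_i\le 2\cdot 5^{q/p}M_{pq}\,z_i$. Summing over $i$ and using $\sum_i z_i\le B^q$, I take the $1/q$-th root and substitute $M_{pq}=\bigl(Mq/(p\ln(1+q/p))\bigr)^{q/p}$:
$$\left(\textstyle\sum_i\expec[Z_i^{q/p}]\right)^{1/q}\ \le\ O(5^{1/p})\cdot (M/p)^{1/p}\Bigl(\tfrac{q}{\ln(1+q/p)}\Bigr)^{1/p}B\ =\ O\!\left(\Bigl(\tfrac{q}{\ln(1+q/p)}\Bigr)^{1/p}\right)B,$$
using that $(M/p)^{1/p}=O(1)$ uniformly for $p\ge 1$.

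The main conceptual point — rather than a hard obstacle — is that Latała's inequality naturally produces two terms, a ``concentration'' term $(\sum\expec[Z_{i\ell}])^{q/p}$ and a ``Poisson/heavy-tail'' term $\sum\expec[Z_{i\ell}^{q/p}]$, and each must be absorbed by exactly one of constraints~(\ref{CP:cluster-bound1}) and~(\ref{CP:cluster-bound2}). This is precisely why the convex program must carry both families of constraints, and why the rounding separation oracle is invoked on both after the reduction exposes the sets $(V_\ell)_{\ell\in K}$. Once both constraints are enforced on this partition (which by the Remark after~(\ref{CP:cluster-bound2}) requires a single oracle call for $\Lambda=\{\ell\in K : \sum_{j\in V_\ell}y_j<1\}$), the argument above is a mechanical chain of inequalities.
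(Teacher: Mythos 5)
Your proof is correct and follows essentially the same route as the paper: Claim~\ref{claim:z-cost} plus Jensen to reduce to $\sum_i\expec[Z_i^{q/p}]$, Latała with $\alpha=q/p$ on the independent variables $Z_{i\ell}$, and then the identity $x'_{\ell\sigma(\ell)}=\max(0,1-\sum_{j\in V_\ell}y_j)$ to absorb the two resulting terms via Constraints~(\ref{CP:cluster-bound1}) and~(\ref{CP:cluster-bound2}) respectively. The only cosmetic difference is that you track the constants $(5^{q/p}+5)$ slightly more explicitly; nothing of substance differs.
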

\begin{proof}
We use random variables $Z_{i\ell}$ defined in (\ref{def:random-vars-Z-new}). By Claim~\ref{claim:z-cost}, the cost of the solution for instance ${\cal I}'$ found by the rounding procedure is 
$\left(\sum_{i=1}^n\left(\sum_{\ell\in K'} Z_{i\ell}\right)^{q/p}\right)^{1/q}$.
We upper bound this cost using Latała's inequality~(\ref{ineq:latala}). For every $i\in[m]$, we have
\begin{align*}
\expec\Bigl[\Bigl(\sum_{\ell\in K'}Z_{i\ell}\Bigr)^{q/p}\Bigr] \leq M_{pq} \left(
\Bigl(\expec\bigl[\sum_{\ell\in K'} {Z}_{i\ell }\bigr]\Bigr)^{q/p}+ \sum_{\ell \in K'} \expec[Z_{i\ell }^{q/p}]\right)
\end{align*}
Now,
\begin{align*}
\expec\Bigl[\sum_{i=1}^n\Bigl(\sum_{\ell\in K'}Z_{i\ell}\Bigr)^{q/p}\Bigr] \leq M_{pq} \left(
\sum_{i=1}^n \Bigl(\expec\bigl[\sum_{\ell\in K'} {Z}_{i\ell }\bigr]\Bigr)^{q/p} + \sum_{i=1}^n\sum_{\ell\in K'} \expec[Z_{i\ell}^{q/p}]\right)
\end{align*}
Using that $\expec [Z_{i\ell}] =5x'_{\ell\sigma(\ell)}\cdot \vol(V_\ell)$, we get
 $$\sum_{i=1}^n\Bigl(\expec\bigl[\sum_{\ell\in K'} {Z}_{i\ell}\bigr]\Bigr)^{q/p} \leq
5^{q/p}\sum_{i=1}^n\Bigl(\sum_{\ell\in K'} x'_{\ell \sigma(\ell)} \cdot \vol_i(V_\ell)\Bigr)^{q/p}.$$ By items 2 and 3 of Theorem~\ref{thm:Charikar-condensed}, $x'_{\ell\sigma(\ell)}= 1 - y_{\ell}' =\max(0,1-\sum_{j\in V_\ell}y_j)$, so from Constraint~\eqref{CP:cluster-bound1} we get $$\sum_{i=1}^n\Bigl(\expec\bigl[\sum_{\ell\in K'} {Z}_{i\ell}\bigr]\Bigr)^{q/p} \leq5^{q/p}\sum_{i=1}^n{z_i}\leq 5^{q/p}B^q.$$
We also have, 
$$\sum_{i=1}^n\sum_{\ell\in K'} \expec\bigl[{Z}_{i\ell}^{q/p}\bigr] \leq
5^{q/p}\sum_{i=1}^n\sum_{\ell\in K'}x'_{\ell\sigma(\ell)}\cdot\vol_i(V_\ell)^{q/p}.$$ 
And by the same argument as above, but using Constraint~\eqref{CP:cluster-bound2}, we also get $$\sum_{i=1}^n\sum_{\ell\in K'} \expec\bigl[{Z}_{i\ell}^{q/p}\bigr] \leq5^{q/p}\sum_{i=1}^n {z_i}\leq 5^{q/p}B^q.$$
We conclude that the expected cost of the clustering is at most,
\begin{multline*}
\expec\left[\Bigl(\sum_{i=1}^n\Bigl(\sum_{\ell\in K'} Z_{i\ell}\Bigr)^{q/p}\Bigr)^{1/q}\right] \stackrel{\parbox{10mm}{\tiny\centering Jensen's\\inequality}}{\leq}
\left(\expec\Bigl[\sum_{i=1}^n\left(\sum_{\ell\in K'} Z_{i\ell}\right)^{q/p}\Bigr]\right)^{1/q} \\
\leq
O\Bigl(M_{pq}^{1/q} 
\Bigr) B = O\Bigl(\Bigl(\frac{q}{\ln(1 + q/p)}\Bigr)^{1/p}\Bigr) B.
\end{multline*}
\end{proof}
\begin{proof}[Proof of Theorem~\ref{thm:main-p-le-q}.]
We solve the convex programming relaxation for the problem, apply the reduction, and run the randomized rounding procedure. This gives us a solution of cost at most
$O\left(\left(\frac{q}{\ln (1 + q/p)}\right)^{1/p}\right) B$ in expectation.
\end{proof}

\section*{Acknowledgments}
The authors would like to thank the anonymous reviewers for their helpful suggestions on improving the notation and presentation of these results.

\bibliography{ref}
\bibliographystyle{abbrvnat}

\appendix
\section{Proof of Theorem~\ref{thm:Charikar-condensed} and Observation~\ref{obs:Charikar}}\label{apx:proof-Charikar}
\begin{proof}
(Proof of Theorem~\ref{thm:Charikar-condensed}.)
Let $\nu = \min(p, q)$. We define the Convex Program (CP) cost of point $j\in [m]$ as 
\begin{equation}\label{eq:Cjx}
C(j,x) \equiv C(j) = \left(\sum_{j'\in[m]} x_{jj'} d(j,j')^\nu\right)^{1/\nu}.
\end{equation}
We sort all points according to the value of $C(j)$. Renaming the points if necessary, we may assume that 
$$C(1) \leq \dots \leq C(m).$$
Now we choose a subset of points $K$
and assign each point to exactly one point in $K$. Initially, $K = \varnothing$ and all points are unassigned.
Then we process points one by one, starting with 1 and ending with $m$. When we process point $j$, we perform the following steps if $j$ has not been assigned to any vertex $j'$ yet.
\begin{itemize}
    \item Add $j$ to $K$ and assign $j$ to $j$.
    \item For every unassigned $j' > j$, if $d(j, j') \leq \frac{2}{\gamma^{1/\nu}} C(j')$, assign $j'$ to $j$.
\end{itemize}

After all the points are processed, each of them is assigned to some $\ell\in K$ (some points are assigned to themselves). For $\ell \in K$, let 
\begin{itemize}
    \item $U_\ell$ be the set of points assigned to $\ell$.
    \item $V_\ell$ be the set of points that are closer to $\ell$ than to any other point in $K$ (we break ties arbitrarily).
    \item $\sigma(\ell)$ be the closest point to $\ell$ in $K$ other than $\ell$ itself. We break ties arbitrarily but consistently; then the set of edges $(\ell, \sigma(\ell))$ forms a forest.
\end{itemize}    
For a set of points $A\subseteq [m]$, denote $w_i(A) = \sum_{j\in A} w_i(j)$.
Define new weights $w_\ell'$ for $\ell \in K$ by 
$$w'_i(\ell) = w_i(U_\ell) =\sum_{j\in U_\ell} w_i(j).$$
We obtained the desired instance ${\cal I}'$ of $(p,q)$-Fair Clustering on $K$ with weights $\{w'_i(j)\}_{j\in K}$.
Now we define the CP solution.
\begin{align*}
    y_\ell' &= \min(1, \sum_{j\in V_\ell} y_j)\\
    x_{\ell\ell}' = y_\ell'; \qquad
    x_{\ell\sigma(\ell)}' &= 1 - y_\ell'; \qquad
    x_{\ell j}' = 0&&\text{for } j\notin\{\ell, \sigma(\ell)\}\\
    z_i' &= 2^q z_i
\end{align*}
We verify that all the required properties hold. 
\medskip

\noindent\textbf{We verify that $(x',y',z')$ is a fractional clustering solution and
item 1 holds.} From the definition of $(x',y',z')$,
it follows right away that Constraints (\ref{LP:k-median-weight}),
(\ref{LP:center-bound}), (\ref{LP:k-median-size}), and 
(\ref{LP:k-median-sign}) are satisfied. We need to check that Constraints (\ref{LP:q-le-p-cost}) and (\ref{CP:natural-cost-bound}). We rewrite CP Constraints (\ref{LP:q-le-p-cost}) and (\ref{CP:natural-cost-bound}) uniformly using notation $C(j)$.
\begin{equation}\label{LP:uniform-cost}
    z_i \geq \left(\sum_{j} w_i(j) C(j)^p \right)^{q/p}.
\end{equation}
Note that for all $j\in U_\ell$, $j \geq \ell$ and therefore $C(\ell) \leq C(j)$. We have,
$$
    z_i^{p/q} \geq \sum_{j\in[m]} w_i(j) C(j)^p 
    = \sum_{\ell\in K}\sum_{j\in U_\ell} w_i(j) C(j)^p    
    \geq \sum_{\ell\in K}\sum_{j\in U_\ell} w_i(j) C(\ell)^p=
    \sum_{\ell\in K} w_i'(\ell) C(\ell)^p.
$$
Now consider $\ell\in K$ and $j\notin V_\ell$. Let us say $j\in V_{\ell'}$. Then $d(\ell, j) \geq d(\ell',j)$ and $d(\ell, j) + d(\ell',j) \geq d(\ell, \ell')$. Therefore, 
\begin{equation}
d(\ell, j) \geq d(\ell, \ell')/2 \geq d(\ell, \sigma(\ell))/2
\label{eq:d-ell-j}
\end{equation}
here we used that $\sigma(\ell)$ is the closest to $\ell$ point in $K$ other than $\ell$ itself. We have,
$$C(\ell)^\nu = \sum_{j\in[m]} d(\ell, j)^\nu x_{\ell j}
\geq 
\sum_{j\notin V_\ell} d(\ell, j)^\nu x_{\ell j}
\geq
\sum_{j\notin V_\ell} \left(\frac{d(\ell, \sigma(\ell))}{2}\right)^\nu x_{\ell j} = \left(\frac{d(\ell, \sigma(\ell))}{2}\right)^\nu
\sum_{j\notin V_\ell} x_{\ell j}.
$$
If $y'_\ell <1$ then
\begin{equation*}
\sum_{j\notin V_\ell} x_{\ell j} = 1 - \sum_{j\in V_\ell} x_{\ell j}\geq 1 - \sum_{j\in V_\ell} y_j = 1 - y_{\ell}' = x'_{\ell\sigma(\ell)}.
\end{equation*}
If $y'_\ell = 1$, then $\sum_{j \notin V_\ell} x_{\ell j} \geq 0 = 1 - y_{\ell}' = x'_{\ell\sigma(\ell)}$.
In either case,
\begin{equation}\label{ineq:bound-y}
\sum_{j\notin V_\ell} x_{\ell j} \geq 1 - y_{\ell}' = x'_{\ell\sigma(\ell)}.
\end{equation}
Similarly to~(\ref{eq:Cjx}), define
$C'(\ell, x')\equiv C'(\ell) = \left(\sum_{\ell'\in K} x'_{\ell\ell'} d(\ell,\ell')^\nu\right)^{1/\nu} = d(\ell, \sigma(\ell))\cdot (x'_{\ell\sigma(\ell)})^{1/\nu}$.
Then,
$$C(\ell)^\nu \geq \frac{1}{2^\nu}d(\ell, \sigma(\ell))^{\nu} x'_{\ell\sigma(\ell)} = \frac{1}{2^\nu} C'(\ell)^\nu.$$

We conclude that $C(\ell) \geq C'(\ell)/2$ and hence
$z_i^{p/q} \geq\frac{1}{2^p}\sum_{\ell\in K} w_i'(\ell) C'(\ell)^p$.
Therefore,
$$(z_i')^{p/q} = 2^p \cdot z_i^{p/q} \geq 2^p \cdot \frac{1}{2^p}\sum_{\ell\in K} w_i'(\ell) C'(\ell)^p  = \sum_{\ell\in K} w_i'(\ell) C'(\ell)^p,$$
as required by CP Constraints \eqref{LP:q-le-p-cost} and \eqref{CP:natural-cost-bound}. Note that since $z_i' = 2^q z_i$,
cost $B'$ of solution $(x',y',z')$ is at most twice that of $(x,y,z)$.

\medskip

\noindent\textbf{Now we verify that item 2 holds and $|K| \leq k/(1-\gamma)$.}
Formulas for $x'$ and $y'$ follow from their definitions. First, we show that $1 - y_\ell' \leq \gamma$.
By (\ref{ineq:bound-y}), $1 - y_{\ell}'\leq \sum_{j\notin V_\ell} x_{\ell j}$. Applying (\ref{eq:d-ell-j}), we get
$$1 - y_{\ell}'\leq \sum_{j\notin V_\ell} x_{\ell j} 
\leq \frac{\sum_{j\notin V_\ell} x_{\ell j} d(\ell, j)^{\nu}}{(d(\ell,\sigma(\ell))/2)^\nu} \leq \frac{(2 C(\ell))^{\nu}}{d(\ell,\sigma(\ell))^\nu}.
$$
Now, both points $\ell$ and $\sigma(\ell)$ are in $K$. Therefore, neither $\ell$ was assigned to $\sigma(\ell)$ nor $\sigma(\ell)$ was assigned to $\ell$. This means that $d(\ell, \sigma(\ell)) > \frac{2}{\gamma^{1/\nu}} \max(C(\ell), C(\sigma(\ell))) \geq 
\frac{2C(\ell)}{\gamma^{1/\nu}}$. We conclude that
$1 - y_\ell' \leq \gamma$, as required.
Finally, the bound $|K| \leq k/(1-\gamma)$ follows from the following  inequality $k \geq \sum_{\ell \in K} y'_\ell \geq (1-\gamma) |K|$.

\medskip

\noindent\textbf{It is immediate that item 3 holds. }

\medskip

\noindent\textbf{Now we verify that item 4 holds.} We first prove inequality~(\ref{eq:costI-prime-leq-costI}).
Consider a set of centers $L \subset [m]$. 
%Let $B_{\cal I} = \gencost^{\cal I}(L)$.
Note that points in $L$ are not necessarily in $K$. So $L$ is not necessarily a valid set of centers for instance ${\cal I}'$. However, we can think of $L$ as a set of Steiner centers and then compute the cost of $L$ with respect to instance ${\cal I}'$. Formally, we write
\begin{align*}
\left(\sum_{i=1}^n \left(\sum_{\ell \in K} w_i'(\ell) \cdot d(\ell, L)^p\right)^{q/p}\right)^{1/q} &= 
\left(\sum_{i=1}^n \left(\sum_{\ell \in K}\sum_{j\in U_\ell} w_i(j) \cdot d(\ell, L)^p\right)^{q/p}\right)^{1/q}\\
&\leq 
\left(\sum_{i=1}^n \left(\sum_{\ell \in K}\sum_{j\in U_\ell} w_i(j) \cdot (d(\ell, j) + d(j, L))^p \right)^{q/p}\right)^{1/q}
\end{align*}
Consider function $\|\cdot\|:{\mathbb R}^m \to \mathbb R$ defined by
$$\|v\| = \left(\sum_{i=1}^n \left(\sum_{j=1}^m w_i(j) \cdot |v_j|^p\right)^{q/p}\right)^{1/q}= \left(\sum_{i=1}^n \left(\sum_{\ell \in K}\sum_{j\in U_\ell} w_i(j) \cdot |v_j|^p\right)^{q/p}\right)^{1/q}.$$
Let $\Lambda_i$ be a linear map that sends $v\in {\mathbb R}^m$ to $(w_i(1)^{1/p} v_1, \dots, w_i(j)^{1/p} v_j, \dots, w_i(m)^{1/p} v_m)$.
Note that $\|v\| = \Bigl\| \|\Lambda_1v\|_p , \dots , \|\Lambda_n v\|_p \Bigr\|_q$. Therefore, $\|\cdot\|$ is a seminorm on ${\mathbb R}^m$.
In particular,
\begin{multline*}
\left(\sum_{i=1}^n \left(\sum_{\ell \in K}\sum_{j\in U_\ell} w_i(j) \cdot (d(\ell, j) + d(j, L))^p \right)^{q/p}\right)^{1/q} \\ \leq 
\left(\sum_{i=1}^n \left(\sum_{\ell \in K}\sum_{j\in U_\ell} w_i(j) \cdot d(\ell, j)^p \right)^{q/p}\right)^{1/q}+
\left(\sum_{i=1}^n \left(\sum_{\ell \in K}\sum_{j\in U_\ell} w_i(j) \cdot d(j, L)^p \right)^{q/p}\right)^{1/q}
\end{multline*}
The second term on the right is simply the cost of $L$ with respect to instance $\cal I$ and thus equals $\gencost^{\cal I}(L)$. Now, we upper bound the first term. Since every $j\in U_\ell$ is assigned to $\ell$, $d(\ell, j) \leq \frac{2}{\gamma^{1/\nu}}C(j)$. Thus the first term is upper bounded by
$$\left(\sum_{i=1}^n \left(\sum_{\ell \in K}\sum_{j\in U_\ell} w_i(j) \cdot \Bigl(\frac{2}{\gamma^{1/\nu}} C(j)\Bigr)^p \right)^{q/p}\right)^{1/q}
\stackrel{\text{\tiny by (\ref{LP:uniform-cost})}}{\leq }
\frac{2}{\gamma^{1/\nu}}
\left(\sum_{i=1}^n z_i\right)^{1/q} \leq \frac{2}{\gamma^{1/\nu}} B.$$
We conclude that
\begin{equation*}\label{eq:main-part-six}
\left(\sum_{i=1}^n \left(\sum_{\ell \in K} w_i'(\ell) \cdot d(\ell, L)^p\right)^{q/p}\right)^{1/q} \leq \gencost^{\cal I}(L) + \frac{2}{\gamma^{1/\nu}} B.
\end{equation*}
Now we construct a proper solution $L'$ for $K$. For every point in $j\in L$, we choose a closest to $j$ point $\ell$ in $K$ (breaking ties arbitrarily) and add it to $L'$. It is immediate that for every $\ell\in K$, $d(\ell, L') \leq 2d(\ell, L)$. Thus,
\begin{equation*}
\gencost^{{\cal I}'}(L')=
\Bigr(\sum_{i=1}^n \Bigl(\sum_{\ell \in K} w_i'(\ell) \cdot d(\ell, L')^p\Bigr)^{q/p}\Bigr)^{1/q} \leq 2\Bigl(\gencost^{\cal I}(L) + \frac{2}{\gamma^{1/\nu}} B\Bigr).
\end{equation*}
As required, we constructed a feasible solution for ${\cal I}'$ of cost at most 
$2\gencost^{\cal I}(L) + \frac{4}{\gamma^{1/\nu}} B$.

Now we prove that inequality (\ref{eq:costI-leq-costI-prime}) holds.
The proof is almost identical to that of inequality~(\ref{eq:costI-prime-leq-costI}).
\begin{align*}
\gencost^{\cal I}(L) &= \Big(\sum_{i=1}^n \Big(\sum_{j \in [m]} w_i(j) \cdot d(j, L)^p\Big)^{q/p}\Big)^{1/q} = 
\Big(\sum_{i=1}^n \Big(\sum_{\ell \in K}\sum_{j\in U_\ell} w_i(j) \cdot d(j, L)^p\Big)^{q/p}\Big)^{1/q}\\
&\leq 
\Big(\sum_{i=1}^n \Big(\sum_{\ell \in K}\sum_{j\in U_\ell} w_i(j) \cdot (d(\ell, j) + d(\ell, L))^p \Big)^{q/p}\Big)^{1/q}\\
&\leq 
\Big(\sum_{i=1}^n \Big(\sum_{\ell \in K}\sum_{j\in U_\ell} w_i(j) \cdot d(\ell, j)^p \Big)^{q/p}\Big)^{1/q}
+
\Big(\sum_{i=1}^n \Big(\sum_{\ell \in K}\sum_{j\in U_\ell} w_i(j) \cdot d(\ell, L)^p \Big)^{q/p}\Big)^{1/q}
\end{align*}
As we showed above, the first term is at most $\frac{2}{\gamma^{1/\nu}} B$; the second term equals $\gencost^{{\cal I}'}(L)$. Therefore, 
$$\gencost^{{\cal I}}(L) \leq \gencost^{{\cal I}'}(L) + \frac{2}{\gamma^{1/\nu}} B.$$

\medskip

\noindent\textbf{Now we verify that item 5 holds.} Take a point $\ell\in\Lambda$ and point $j\in V_\ell$. Let $j'$ be the closest point to $j$ outside $V_\ell$ and let $\ell'$ be the closest point in $K$ to it (so $j'\in V_{\ell'})$. Then we have
\begin{align*}
    d(j,K\setminus\Lambda)&\leq d(j,\sigma(\ell))&\text{since }\sigma(\ell)\not\in\Lambda\\
    &\leq d(j,\ell)+d(\ell,\sigma(\ell)) &\text{by triangle inequality}\\
    &\leq d(j,\ell)+d(\ell,\ell') &\text{by choice of }\sigma(\ell)\\
    &\leq d(j,\ell)+2d(j',\ell) &\text{by~\eqref{eq:d-ell-j}}\\
    &\leq d(j,\ell)+ 2d(j',j)+2(j,\ell)&\text{by triangle inequality}\\
    &= 3d(j,\ell)+2d(j,j')= 3d(j,K)+2d(j,[m]\setminus V_\ell)
\end{align*}

\iffalse
By the second inequality above, we also have
$$d(j,K\setminus\Lambda)\leq d(j,K)+d(\ell,\sigma(\ell))\leq 3d(j,K)+2d(\ell,\sigma(\ell)).$$
\fi

Of course, if $\ell\in K\setminus\Lambda$ and $j\in V_\ell$, then $d(j,K\setminus\Lambda)=d(j,K)$. Thus, for \emph{any} center $\ell\in K$ we have
$$d(j,K\setminus\Lambda)\leq 3d(j,K)+2\cdot\1_{\ell\in\Lambda}\cdot d(j,[m]\setminus V_\ell).$$
Using that $\|\cdot\|$ is a seminorm as in the proof of item 4, we get
\begin{align*}
\gencost^{\cal I}&(K\setminus \Lambda) = \Bigl(\sum_{i=1}^n \Bigl(\sum_{j=1}^m w_i(j) \cdot d(j, K\setminus \Lambda)^p\Bigr)^{q/p}\Bigr)^{1/q}\\
&\leq 
3\Bigl(\sum_{i=1}^n \Bigl(\sum_{j=1}^m w_i(j) \cdot d(j, K)^p\Bigr)^{q/p}\Bigr)^{1/q}+
2\Bigl(\sum_{i=1}^n \Bigl(\sum_{\ell\in\Lambda}\sum_{j\in V_\ell} w_i(j) \cdot d(j, [m]\setminus V_\ell)^p\Bigr)^{q/p}\Bigr)^{1/q}\\
&=
3 \gencost^{\cal I}(K) + 2\Bigl(\sum_{i=1}^n \Bigl(\sum_{\ell\in\Lambda}\vol_i(V_\ell)\Bigr)^{q/p}\Bigr)^{1/q}
\end{align*}
By item 4, the cost %$\gencost(K)$
of $K$ w.r.t. instance $\cal I$, that is, $\gencost^{\cal I}(K)$, is at most the cost of $K$ w.r.t. instance ${\cal I}'$, which is 0, plus $2\gamma^{-1/\nu} B$. We conclude that
$$
\gencost(K\setminus \Lambda) \leq 6\gamma^{-1/\nu} B + 2\Bigl(\sum_{i=1}^n \Bigl(\sum_{\ell\in\Lambda}\vol_i(V_{\ell})\Bigr)^{q/p}\Bigr)^{1/q}.$$
\end{proof}
\begin{proof}(Proof of Observation~\ref{obs:Charikar}.)
Recall that $\sigma(\ell)$ is the closest to $\ell$ point in $K$ other than $\ell$ itself. We break ties arbitrarily but consistently so that the set of edges $(\ell, \sigma(\ell))$ forms a forest on $K$. We choose an arbitrary root in every tree in the forest $(K, {(\ell, \sigma(\ell))}$. Then we let $K_1$ be the set of vertices of odd depth and $K_2$ be the set of vertices of even depth.
Clearly, $\sigma(K_1) \subseteq K_2$ and $\sigma(K_2) \subseteq K_1$. 
Now,
$$\sum_{\ell\in K} x'_{\ell\sigma(\ell)} = \sum_{\ell\in K} (1 - y_{\ell}) = |K| - \sum_{\ell\in K} y_{\ell} = |K| - k.
$$
Therefore, $\sum_{\ell\in K_1} x'_{\ell\sigma(\ell)} \geq \frac{|K|-k}{2}$ or 
$\sum_{\ell\in K_2} x'_{\ell\sigma(\ell)} \geq \frac{|K|-k}{2}$.
We are done if the former inequality holds. Otherwise, we simply swap $K_1$ and $K_2$.
\end{proof}

\section{Connection to Min \texorpdfstring{$s$}{s}-Union and Conjectured Hardness}\label{sec:hardness}
Let us see how known hardness conjectures for other problems imply polynomial hardness of approximation for $(p,q)$-Fair Clustering when $q<p-\Omega(1)$.

In the Min $s$-Union problem, we are gives a collection of $m$ sets $S_1,\ldots,S_m\subseteq[n]$, and a parameter $s\in[m]$, and wish to find a collection $J\subseteq[m]$ of (indices of) sets of size $|J|=s$ as to minimize the cardinality of their union $|\bigcup_{j\in J}S_j|$. It has been conjectured (see~\cite{chlamtavc2017minimizing}) that the following distinguishing problem, slightly rephrased here, is hard (and implies polynomial hardness of approximation for Min $s$-Union):

\begin{conjecture}[Dense versus Random Conjecture]
For any constants $0<\varepsilon<\delta<1$, it is hard to distinguish between the following cases when $s\leq\sqrt{m}$:
\begin{itemize}
    \item \textbf{Random:} $n=m^\varepsilon$ and each of the sets $S_1,\ldots,S_m$ is sampled independently by selecting $10\ln m$ elements in $[n]$ independently at random.
    \item \textbf{Dense:} $n=10 m^{\varepsilon}\ln m$, and an adversary may select the sets, with an additional guarantee that there exist sets $J\subseteq[m]$ and $I\subseteq[n]$ such that $|J|=s$ and $|I|=(10s\ln m)^\delta$, and for every $j\in J$, $S_j\subseteq I$ (that is, there are $s$ input sets contained in a single set of cardinality~$\tilde{O}(s^\delta)$).
\end{itemize}
\label{conj:dense-vs-random}
\end{conjecture}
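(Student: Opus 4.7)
The plan is to attempt a proof by reduction from a better-studied average-case hardness assumption, since Conjecture~\ref{conj:dense-vs-random} is an average-case distinguishing statement for which no unconditional lower bound is known. I would first reformulate the problem as a bipartite planted dense subgraph question: view each set $S_j$ as the neighborhood of a left-vertex $j$ in a bipartite graph $G \subseteq [m] \times [n]$ with prescribed left-degree $10\ln m$. In the random case this is a uniformly random left-regular bipartite graph; in the dense case an adversary hides a bi-cluster where $s$ left-vertices share a common right-neighborhood of size only $\tilde O(s^\delta)$, which is substantially below the random expectation of $\tilde \Theta(s \cdot m^{-\varepsilon}\cdot n)$ when $\delta < 1$.

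First I would establish a contiguity (or total-variation) bound between the left-regular model and the i.i.d.\ edge model $G(m,n,p)$ with $p = 10\ln m / n$, so that hardness transfers between the two models up to a constant-factor change in the detectability threshold. Next I would design an encoding of an instance of a planted bipartite-clique distinguishing problem (or planted Densest-$k$-Subgraph) into our distinguishing problem: given a bipartite graph with or without a planted $K_{s,t}$ for $t = \tilde O(s^\delta)$, apply a degree-fixing post-processing step (resample each left-neighborhood conditioned on having size exactly $10\ln m$, and pad $n$ to match the target) and argue that (i) the no-planted case maps to a distribution close in total variation to the Random case of the conjecture, and (ii) the planted case maps to a valid Dense instance, since the planted structure survives the degree correction with high probability.

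The main obstacle is matching parameter regimes. Conjecture~\ref{conj:dense-vs-random} asks for hardness up to $s \leq \sqrt{m}$ with $n = m^\varepsilon$ much smaller than $m$, whereas the standard Planted Clique hardness is meaningful only for $s \gtrsim \sqrt{n}$, and known reductions to Densest-$k$-Subgraph incur losses that do not cleanly propagate in this very sparse regime. In particular, for $\delta$ close to $1$, the planted cluster is only slightly denser than random, and no existing reduction produces such a tight gap from a standard worst-case or average-case assumption. I therefore expect that a complete proof would require either (i) a new average-case reduction that exploits the bounded left-degree structure of the instance, or (ii) an unconditional lower bound in a restricted algorithmic model, most naturally degree-$d$ Sum-of-Squares, following the framework of Barak et al.\ for SoS lower bounds on planted clique. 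Bridging these known techniques to the precise parameter window of Conjecture~\ref{conj:dense-vs-random} is exactly where the essential difficulty lies, and it is the reason this statement is posed as a conjecture rather than a theorem; any proposal short of solving this parameter-matching problem would fall short of a genuine proof.
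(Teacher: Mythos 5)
This statement is a conjecture, and the paper does not prove it: it is imported from prior work (Chlamt\'a\v{c}, Dinitz, Konrad, Kortsarz, Rabanca and related papers) and used purely as a hardness assumption, with the paper offering only supporting evidence in the form of Sherali--Adams and Sum-of-Squares integrality gaps and a connection to the Projection Games Conjecture. Your proposal correctly recognizes this status --- you do not claim a proof, and the routes you sketch (average-case reductions from planted bipartite dense subgraph, or unconditional lower bounds in the Sum-of-Squares model in the style of Barak et al.) are exactly the kind of evidence the paper itself cites rather than a proof, while the parameter-matching obstruction you identify ($s\leq\sqrt{m}$, $n=m^{\varepsilon}\ll m$, and $\delta$ close to $1$) is indeed why the statement remains conjectural. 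So there is nothing to compare at the level of actual proofs: neither you nor the paper establishes the conjecture, and your assessment of its standing and of the available evidence is accurate.
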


This conjecture is supported by Sherali-Adams integrality gaps (cf.\ \cite{chlamtavc2017minimizing}, \cite{chlamtac-manurangsi18}) and is related to similar conjectures for related problems, and in particular to the Projection Games Conjecture. \cite{DkS-SoS-gaps} show a somewhat weaker (but still polynomial) lower bound for this problem is also supported by a matching integrality gap in the Sum of Squares hierarchy.

A straightforward reduction gives the following:
\begin{theorem}
  For any $1\leq q\leq p$ and constants $0<\varepsilon<\delta<1$, where $\varepsilon\leq\frac12$, if the Dense versus Random Conjecture holds for $\delta,\varepsilon$, then $(p,q)$-Fair Clustering is hard to approximate to within less than $m^{\varepsilon(1-\delta)(p-q)/(pq)}$.\label{thm:hardness}
\end{theorem}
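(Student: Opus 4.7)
The plan is to give a gap-preserving reduction from the Dense versus Random distinguishing problem of Conjecture~\ref{conj:dense-vs-random} to $(p,q)$-Fair Clustering. Given a set system $S_1,\ldots,S_m\subseteq[n]$ from either case of the conjecture, I build a $(p,q)$-Fair Clustering instance on the metric space $([m],d)$ with uniform distances $d(j,j')=\1[j\neq j']$, with $k=m-s$ centers to open, and with $n$ groups, one per universe element: $w_i(j):=\1[i\in S_j]$. For any set $C$ of $k$ centers, let $J=[m]\setminus C$ (so $|J|=s$) and $\mu_i(J):=|\{j\in J:i\in S_j\}|$. Since $d(j,C)=\1[j\in J]$, we have $\cost(C,w_i)^p=\mu_i(J)$ and
\[
\gencost(C)^q=\sum_{i=1}^n\mu_i(J)^{q/p}.
\]
Thus, choosing the $k$ centers is equivalent to choosing the complementary $s$-subfamily $J$ in Min $s$-Union, and the cost depends on both the union size and the multiplicity profile of $J$.

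For the \emph{dense case}, I take $C=[m]\setminus J$ for the planted subfamily, which has union size $u\le(10s\ln m)^{\delta}$ and total multiplicity $T=\sum_{j\in J}|S_j|=10s\ln m$. Since $q\le p$, the map $x\mapsto x^{q/p}$ is concave, so Jensen's inequality gives $\sum_i\mu_i(J)^{q/p}\le u^{1-q/p}T^{q/p}$, yielding $\gencost(C)\le \tilde O(s^{\delta/q+(1-\delta)/p})$. For the \emph{random case}, I will show that with high probability \emph{every} $s$-subset $J\subseteq[m]$ satisfies $|\bigcup_{j\in J}S_j|\ge\Omega(s\ln m)$. Since $\mu_i(J)\ge 1$ whenever $\mu_i(J)>0$, this gives $\gencost(C)^q\ge\Omega(s\ln m)$ and thus $\gencost(C)\ge\tilde\Omega(s^{1/q})$. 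The ratio of these two bounds is $\tilde\Omega(s^{1/q-\delta/q-(1-\delta)/p})=\tilde\Omega(s^{(1-\delta)(p-q)/(pq)})$. I choose $s$ to be the largest value of the form $\Theta(m^{\varepsilon}/\polylog(m))$: this is at most $\sqrt m$ since $\varepsilon\le 1/2$ (so the conjecture applies), and also ensures $s\ln m\ll n=m^{\varepsilon}$, so random $s$-subsets are essentially disjoint. The resulting gap is $m^{\varepsilon(1-\delta)(p-q)/(pq)}/\polylog(m)$; to absorb the polylog factor and rule out \emph{any} approximation ratio strictly less than $m^{\varepsilon(1-\delta)(p-q)/(pq)}$, I rerun the argument with $\varepsilon'$ slightly less than $\varepsilon$ (the conjecture is assumed to hold for any constants $0<\varepsilon'<\delta<1$).

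The main obstacle is the random-case lower bound, which must hold for \emph{every} $s$-subset simultaneously, not merely in expectation or for a fixed subset. For a fixed $J$, the indicators $\1[i\in\bigcup_{j\in J}S_j]$, $i\in[n]$, are independent over the random choice of sets, each succeeding with probability $\gtrsim s\ln m/n$ in the regime $s\ln m\ll n$; a Chernoff bound then gives $\Pr\bigl[|\bigcup_{j\in J}S_j|\le \tfrac12\expec[|\bigcup_{j\in J}S_j|]\bigr]\le\exp(-\Omega(s\ln m))=m^{-\Omega(s)}$. Thanks to the factor $10$ in the conjecture's sampling rate $10\ln m$, the hidden constant can be made larger than $1$, so a union bound over the $\binom{m}{s}\le m^s$ choices of $J$ still succeeds. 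The remainder of the proof is routine: checking the Jensen-based upper bound in the dense case, tracking $\tilde O$ factors, and verifying that $k=m-s$ is a legal number of centers for the resulting $(p,q)$-Fair Clustering instance.
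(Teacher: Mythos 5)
Your proposal follows essentially the same route as the paper's proof: the same reduction (uniform metric on $[m]$, indicator weights $w_i(j)=\1[i\in S_j]$, budget $k=m-s$), the same random-case lower bound via the union size (each $\mu_i(J)^{q/p}\geq\1[\mu_i(J)\geq 1]$), and the same dense-case upper bound via the power-mean/H\"older inequality applied to the multiplicity profile on $I$. The algebra $\delta/q+(1-\delta)/p$ and the final exponent $(1-\delta)(p-q)/(pq)$ match the paper exactly.

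There is one place where you are actually \emph{more} careful than the paper, and one place where you slip. The paper asserts that in the random case, for all $J$ of size $s=m^{\varepsilon}$, one has $|\bigcup_{j\in J}S_j|\geq\tfrac12 m^{\varepsilon}\ln n$; with $n=m^{\varepsilon}$ this exceeds the universe size $n$ itself, so that inequality cannot hold as written, and the correct lower bound is $\Theta(m^{\varepsilon})$ (up to a polylog), not $\Theta(m^{\varepsilon}\ln m)$. Your version, taking $s=\Theta(m^{\varepsilon}/\polylog m)$ so that $s\ln m\ll n$ and the union is $\tilde{\Theta}(s\ln m)$ with exponential concentration surviving the $\binom{m}{s}$ union bound, is tighter and closes this gap; as you note, the factor $10$ in the conjecture's $10\ln m$ gives enough slack in the exponent. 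The resulting ratio is then $m^{\varepsilon(1-\delta)(p-q)/(pq)}/\polylog(m)$, as you say, rather than the paper's claimed clean $m^{\varepsilon(1-\delta)(p-q)/(pq)}$; so a perturbation argument is genuinely needed to match the theorem's statement. Your perturbation goes in the wrong direction, though: to rule out approximation below $m^{\varepsilon(1-\delta)(p-q)/(pq)}$ from a gap of $m^{\varepsilon'(1-\delta)(p-q)/(pq)}/\polylog(m)$ you need $\varepsilon'$ slightly \emph{larger} than $\varepsilon$ (permissible as long as $\varepsilon'<\delta$, which the strict inequality $\varepsilon<\delta$ allows), not slightly smaller. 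With that sign corrected, the argument is complete.
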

\begin{proof}
Let us see a reduction from Min $s$-Union with these parameters to $(p,q)$-Fair Clustering, and analyze the optimum clustering value in both cases.

Given an instance $(S_1,\ldots,S_m,s)$ of Min $s$-Union, construct a clustering instance as follows: Identify the points $[m]$ with the sets $S_1,\ldots, S_m$ and let the distance between any two points be $1$. For all $i\in[n]$ and $j\in[m]$, let $w_i(j)=\1_{i\in S_j}$. Finally, let the target number of centers be $k=m-s$. In the context of the reduction, let us consider instances for which $s=m^{\varepsilon}(\leq\sqrt{m})$.

Consider first the random case. By a Chernoff bound, w.h.p., for any set $J\subseteq m$ of cardinality $|J|=s=m^{\varepsilon}$, we have
$$\left|\bigcup_{j\in J}S_j\right|\geq \frac12m^{\varepsilon}\ln n.$$
Thus, for any set of centers $K$ of cardinality $|K|=k$, the cost of $K$ is bounded from below by
\begin{align*}\gencost(K)&=\left(\sum_{i=1}^n\left(\sum_{j=1}^m\1_{i\in S_j}d(j,K)^p\right)^{q/p}\right)^{1/q}\\
&=\left(\sum_{i=1}^n|\{j\in [m]\setminus K\mid i\in S_j\}|^{q/p}\right)^{1/q}\\
&\geq \left(\sum_{i=1}^n\1_{i\in\bigcup_{j\in[m]\setminus K}S_j}\right)^{1/q}\\
&=\left|\bigcup_{j\in [m]\setminus K}S_j\right|^{1/q}\geq \left(\frac12m^{\varepsilon}\ln m\right)^{1/q}.
\end{align*}

On the other hand, consider the dense case. Let $J$ be the ``dense" set as defined in the conjecture, and let $K=[m]\setminus J$. Then as before, the cost of these centers is bounded from above by
\begin{align*}\gencost(K)&=\left(\sum_{i=1}^n|\{j\in J\mid i\in S_j\}|^{q/p}\right)^{1/q}\\
&=\left(\sum_{i\in\bigcup_{j\in J}S_j}|\{j\in J\mid i\in S_j\}|^{q/p}\right)^{1/q}\\
&=\left(\sum_{i\in I}|\{j\in J\mid i\in S_j\}|^{q/p}\right)^{1/q}&\text{since }\forall j\in J:S_j\subseteq I\\
&\leq \left(|I|^{(q-p)/p}\left(\sum_{i\in I}|\{j\in J\mid i\in S_j\}|\right)^{q/p}\right)^{1/q}&\text{by H\"older}\\
&\leq |I|^{(q-p)/(pq)}\left(\sum_{j\in J}|S_j|\right)^{1/p}\\
&\leq (10s\ln m)^{\delta(p-q)/(pq)}(10s\ln m)^{1/p}\\
&=(10m^{\varepsilon}\ln m)^{(\delta(p-q)+q)/(pq)}.
\end{align*}

Thus, if the conjecture holds for these parameters, then $(p,q)$-Fair Clustering cannot be approximated to within less than the ratio between these bounds (the random optimum divided by the dense optimum), which is at least
\begin{align*}
    20^{-1/q}(10m^{\varepsilon}\ln m)^{(1/q)-(\delta(p-q)+q)/(pq)}&=20^{-1/q}(10m^{\varepsilon}\ln m)^{(1-\delta)(p-q)/(pq)}
    \geq m^{\varepsilon(1-\delta)(p-q)/(pq)}.
\end{align*}
\end{proof}

While not every setting of $0<\varepsilon<\delta<1$ with $\varepsilon\leq\frac12$ in the conjecture is supported by the same evidence (the Sum of Squares result, for instance, gives a much smaller polynomial gap), it is worth noting that any setting yields a lower bound of $m^{\Omega((p-q)/(pq))}$ for $(p,q)$-Fair Clustering, and in particular, if $\varepsilon$ and $\delta$ are arbitrarily close to $\frac12$, as is permissible in Conjecture~\ref{conj:dense-vs-random}, then we get a lower bound of $m^{\left(\frac14-\eta\right)\frac{p-q}{pq}}$ for any constant $\eta>0.$

\end{document}